\documentclass{article}
\usepackage{url}
\usepackage{graphics}
\usepackage{tikz,pgffor,amsmath,amssymb,amsthm}
\usetikzlibrary{calc}

\DeclareMathOperator{\KS}{\mathit{C}\,}
\DeclareMathOperator{\KP}{\mathit{K}\,}
\newcommand{\0}{\mathbf{0'}}
\DeclareMathOperator{\KPH}{\mathit{K}^\0\,}

\newcommand{\rar}{\rightarrow}

\newtheorem{theorem}{Theorem}

\newtheorem{lemma}[theorem]{Lemma}

\newtheorem{question}[theorem]{Question}

\begin{document}

\title{Prefix and plain Kolmogorov complexity characterizations of $2$-randomness:\\ simple proofs}
\author{Bruno Bauwens\footnote{
  LORIA, Universit\'e de Lorraine, 
  615-B248 Rue du Jardin Botanique,
  54600 Vand\OE vre-l\`es-Nancy, France, www.bcomp.be.
  The research (carried out in spring 2012) was supported by the Portuguese science foundation FCT
  SFRH/BPD/75129/2010 and NAFIT ANR-08-EMER-008-01 projects.
  The author is grateful to Laurent Bienvenu and Nikolay Vereshchagin for there
  simplified proofs of Conidis' result~\cite{personalKoliaLaurent12} (see
  also~\cite{limitComplexitiesOnceMore}).
  The author is also grateful to Sasha Shen for useful discussion and the full rewrite of the text
  presented here!
  }
}

\date{}

\maketitle  

\begin{abstract}
  Joseph Miller~\cite{Miller2randC} and independently Andre Nies, Frank Stephan and Sebastiaan
  Terwijn~\cite{Nies2rand} gave a complexity characterization of $2$-random sequences in terms of
  plain Kolmogorov complexity $C(\cdot)$: they are sequences that have infinitely many initial
  segments with $O(1)$-maximal plain complexity (among the strings of the same length). 

  Later Miller \cite{Miller2randK} showed that prefix complexity $K(\cdot)$ can also be used in a
  similar way: a sequence is $2$-random if and only if it has infinitely many initial segments with
  $O(1)$-maximal \emph{prefix} complexity (which is $n+\KP(n)$ for strings of length~$n$). 

  The known proofs of these results are quite involved; in this paper we provide simple direct proofs for both of them.

  In~\cite{Miller2randC} Miller also gave a quantitative version of the first result: the
  $\mathbf{0}'$-randomness deficiency of a sequence $\omega$ equals $\liminf_n [n -
  \KS(\omega_1\dots\omega_n)] + O(1)$.  (Our simplified proof can also be used to prove this.)
  We show (and this seems to be a new result) that a similar quantitative result
  is also true for prefix complexity: $\mathbf{0}'$-randomness deficiency equals 
  $\liminf_n [n + \KP(n) - \KP(\omega_1\dots\omega_n)]+ O(1)$. 
\end{abstract}

\section*{Introduction}

The connection between complexity and randomness is one of the basic ideas that motivated the
development of algorithmic information theory and algorithmic randomness theory. However, at first
the definition of complexity (plain complexity of a bit string, introduced by
Ray Solomonoff~\cite{solomonoffI} and
Andrei Kolmogorov~\cite{Kolmogorov65} as the minimal length of a program that produces this string) and the definition
of randomness (given by Per Martin-L\"of~\cite{Martinlof66}) were given separately, and only later some connections
between them became clear.

Leonid Levin~\cite{levinPpi,complexityOfComplexity} and 
later Gregory Chaitin~\cite{chaitin75} introduced a modified version of complexity, called \emph{prefix}
complexity and denoted usually by $\KP(\cdot)$, that corresponds to self-de\-limit\-ing programs. It
turned out (see the papers of Claus-Peter Schnorr~\cite{SchnorrProcess},
Levin~\cite{levin73}, Chaitin~\cite{chaitin75}) that a bit
sequence $\omega=\omega_1\omega_2\ldots$ is Martin-L\"of random if and only if $\sup_n
[n-\KP(\omega_1\ldots\omega_n)]$ is finite. Moreover, this supremum coincides with randomness
deficiency (a quantitative version of Martin-L\"of definition of randomness suggested by Levin and
Peter Gacs, see~\cite{GacsExactExpressions}).

Let us recall the definition of randomness deficiency since it is less known compared to other
notions of algorithmic information theory. By $\Omega$ we denote the Cantor space of infinite bit
sequences. 

\begin{itemize}

\item A \emph{basic function} is a function $f\colon\Omega\to\mathbb{Q}^+$ whose value $f(\omega)$
  is a non-negative rational number that depends on a \emph{finite} initial prefix of $\omega$ of
  some length. Basic functions are constructive objects, so we can speak about computable sequences
  of basic functions.

\item A \emph{lower semicomputable function} is a function $f\colon\Omega\to\overline{\mathbb{R}}^+$
  (values are non-negative reals and $+\infty$) that is a pointwise upper bound of a computable
  sequence of basic functions. Equivalent definition: a sum $\sum h_i(\cdot)$ where $h_i(\cdot)$ is
  a computable sequence of basic functions.

\item A \emph{randomness test} is a lower semicomputable function $t$ such that the integral $\int
  t(\omega) dP(\omega)$ does not exceed~$1$. (Here $P$ is the uniform Bernoulli measure on Cantor
  space that corresponds to independent fair coin tossings.)

\item There exists a \emph{universal} randomness test $u(\omega)$ that exceeds every other one (up
  to $O(1)$-factor). We fix some universal randomness test $\mathbf{u}$. Its logarithm $\log \mathbf{u}(\omega)$ is
  called the \emph{randomness deficiency} of $\omega$ and denoted by $\mathbf{d}(\omega)$. 
  The randomness deficiency is
  defined up to $O(1)$-additive term since different universal tests differ at most by a bounded
  factor.

\end{itemize}

The quantitative version of Schnorr--Levin theorem says that 
$$
    \mathbf{d}(\omega)=\sup_n [n - \KP(\omega_1\ldots\omega_n)]+O(1).
$$
So we can give an equivalent definition of randomness deficiency just as the supremum in the right-hand side of this equation.

This statement looks a bit counterintuitive. One can expect that a sequence is random if its initial
segments (prefixes) have maximal possible complexity (among all strings of the same length). But the
maximal prefix complexity for $n$-bit strings is $n+\KP(n)$, not $n$, up to $O(1)$ additive term. So why we compare
$\KP(\omega_1\ldots\omega_n)$ to $n$, not to $n+\KP(n)$? Or why we consider prefix complexity and
not the plain one, for which the maximal complexity of $n$-bit string is indeed~$n$?

The obstacle here is an old Martin-L\"of observation: for \emph{every} sequence $\omega$ the
difference $n-\KS(\omega_1\ldots\omega_n)$, as well as the difference
$n+\KP(n)-\KP(\omega_1\ldots\omega_n)$,  is unbounded. There are some workarounds, still: for
example, instead of requiring that $n-\KS(\omega_1\ldots\omega_n)$ is bounded for \emph{all} $n$, we
can require it to be bounded for \emph{infinitely many} $n$, i.e., consider sequences such that
$\liminf_n[n-\KS(\omega_1\ldots\omega_n)]$ is finite.\footnote{The other (may be, more natural)
approach is to consider the so-called \emph{monotone} complexity, or \emph{a priori} complexity,
that do not have this problem. We do not consider these complexities in our paper.} It is easy to
see that indeed this $\liminf$ is finite for almost all sequences (except for a set of zero
measure). What are these sequences? 

The answer was found by Joseph Miller~\cite{Miller2randC} and independently by Andre Nies, Frank
Stephan and Sebastian Terwijn~\cite{Nies2rand}. They proved that this class of sequences  coincides
with the class of \emph{$2$-random sequences}, i.e., the sequences that are Martin-L\"of random even
with an oracle for $\mathbf{0}'$ (the halting problem).  The proof in~\cite{Miller2randC} is quite involved, and
the proof in~\cite{Nies2rand} uses special tools from recursion theory (the low basis theorem). Some
other approach was suggested in~\cite{limitComplexitiesRevisited}, 
and later Chris Conidis~\cite{conidis} showed that
one can avoid low basis theorem in this way. Still Conidis' argument is a bit complicated. In
Section~\ref{sec:plain} we provide a simple proof of Conidis' result thus giving a simple proof of
Miller--Nies--Stephan-Terwijn characterization of $2$-random sequences. Extending this argument and
using an effective version of Fatou lemma, we get also a new simple proof for a quantitative version
of this characterization from~\cite{Miller2randC}: 
$$
  \liminf [n-\KS(\omega_1\ldots\omega_n)]=\mathbf{d}^{\mathbf{0}'}(\omega)+O(1).
$$
In the right-hand side $\mathbf{d}^{\mathbf{0}'}$ stands for the randomness deficiency relativized to
$\mathbf{0}'$; this deficiency is finite when $\omega$ is $2$-random.
  
Later Miller \cite{Miller2randK} got a similar result for prefix complexity: a sequence $\omega$ is
$2$-random if and only if $\omega$ has infinitely many initial segments with $O(1)$-maximal
\emph{prefix} complexity (which is $n+\KP(n)$ for strings of length~$n$), i.e., if 
$$
  \liminf [n+\KP(n)-\KP(\omega_1\ldots\omega_n)]
$$
is finite. The original proof was even more complicated than the proof for plain complexity; it used
van Lambalgen theorem about random pairs, Ku\v{c}era -- Slaman result about random lower semicomputable
reals and some other tools. Some simplifications were found by Laurent Bienvenu and others (see Downey and
Hirschfeldt~\cite{Downey}), but even with these simplifications the proof remains quite difficult. In
Section~\ref{sec:prefix} we present a much simpler proof.

Finally, in Section~\ref{sec:prefixStrong} we show that this result also has a quantitative version,  
thus completing the picture:
\begin{eqnarray*}
 \mathbf{d}^{\0}(\omega) & = & \sup n - \KP^\0(\omega_1\dots\omega_n) + O(1) \\
 & = & \liminf [n - \KS(\omega_1\dots\omega_n)] + O(1) \\
 & = & \liminf [n + \KP(n) - \KP(\omega_1\dots\omega_n)] + O(1) \,. 
\end{eqnarray*}
It is not clear whether this quantitative version can be extracted from
Miller's argument. One can raise the question whether the same initial segments have maximal plain
or prefix complexity. In an upcomming paper we show this is not the case: for every $3$-random
sequence, there exist a $c$ and infinitely many prefixes $x$ such that $n- \KS(x) \le c$ and $n +
\KP(n) - \KP(x) \ge \log\log n - c$.

Section~\ref{sec:plain} and \ref{sec:prefix}--\ref{sec:prefixStrong} are (mostly) independent, so the readers interested only in plain or prefix complexity
can proceed directly to the corresponding part of the paper.

\section{Plain complexity and 2-randomness}
\label{sec:plain}

This section is devoted to the Miller--Nies--Stephan--Terwijn characterization of $2$-random
sequences in terms of plain complexity, and it's quantified form: 

\begin{theorem}[Miller]\label{th:2randC}
  \[
    \mathbf{d}^\0(\omega) = \liminf [n - \KS(\omega_1\ldots \omega_n)] + O(1) \,.
  \]
\end{theorem}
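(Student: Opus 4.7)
The plan is to establish the two inequalities separately.

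For the easy direction $\liminf_n[n - \KS(\omega_1\dots\omega_n)] \le \mathbf{d}^{\0}(\omega) + O(1)$, I would build a uniform $\0$-Martin-L\"of test capturing sequences with large $\liminf$. For each $c$, let
\[
  U_c := \{\omega : \exists N\ \forall n \ge N,\ \KS(\omega_1\dots\omega_n) < n - c\}.
\]
Since $\KS$ is upper semicomputable, $U_c$ is $\Sigma^0_2$ uniformly in $c$; the counting bound $|\{x \in \{0,1\}^N : \KS(x) < N - c\}| \le 2^{N-c+1}$ applied to the witnessing $N$, combined with the fact that $U_c = \bigcup_N B_{N,c}$ is an increasing union of the nested sets $B_{N,c} := \{\omega : \forall n \ge N,\ \KS(\omega_1\dots\omega_n) < n - c\}$, gives $P(U_c) \le 2^{-c+1}$. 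This yields a uniform $\0$-Martin-L\"of test. An effective Fatou lemma then upgrades this family into a single integrable $\0$-lower semicomputable function that dominates $2^{\liminf_n[n-\KS(\omega_1\dots\omega_n)]}$ up to a constant, whence the claimed inequality follows from the universality of $\mathbf{u}^{\0}$.

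For the hard direction $\mathbf{d}^{\0}(\omega) \le \liminf_n[n - \KS(\omega_1\dots\omega_n)] + O(1)$, I would apply a Conidis-style plain-compression argument. If $\mathbf{d}^{\0}(\omega) > d$, then $\omega$ belongs to some $\Sigma^0_2$ open set $V$ of measure at most $2^{-d}$ (a level of the universal $\0$-test), which can be written as $V = \bigcup_N \bigcap_{s \ge N} W_s$ for uniformly c.e.\ open sets $W_s$. I would then construct, \emph{without} any oracle, a plain Turing machine that, for every $\omega \in V$ and every sufficiently large $n$, produces a program for $\omega_1\dots\omega_n$ of length at most $n - d + O(1)$. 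The idea is to index each long prefix into a carefully ordered enumeration of the cylinders appearing across the stages of the approximation to $V$, absorbing the cost of cylinders retracted at stage changes with $O(1)$ bookkeeping. It follows that $n - \KS(\omega_1\dots\omega_n) \ge d - O(1)$ for all large $n$, hence $\liminf \ge d - O(1)$.

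The main obstacle is the hard direction, and specifically the tight additive-$O(1)$ control of both inequalities. The core Conidis construction---converting a $\Sigma^0_2$-test into plain (oracle-free) compressions of all long prefixes---already produces the qualitative Miller--Nies--Stephan--Terwijn equivalence ``$\mathbf{d}^{\0}(\omega) < \infty$ iff $\liminf < \infty$''; the quantitative refinement needs the effective version of Fatou's lemma to eliminate the $O(\log)$ loss one would otherwise incur. The subtlety is that the naive bound $\int 2^{n - \KS(\omega_1\dots\omega_n)}\,dP = \sum_{|x|=n} 2^{-\KS(x)}$ is only $\Theta(n)$, too weak for classical Fatou; the effective form, exploiting the uniform computability of the approximations to $\KS$ and of the sets $U_c$, is what yields the required $O(1)$ integrability of $\omega \mapsto 2^{\liminf_n[n-\KS(\omega_1\dots\omega_n)]}$ as a $\0$-lower semicomputable function.
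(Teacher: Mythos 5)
Your two-inequality plan is the right shape, but both halves have real gaps, and for one half you take a genuinely different (and here insufficient) route from the paper. For the inequality $\liminf_n[n-\KS(\omega_1\dots\omega_n)]\le\mathbf{d}^\0(\omega)+O(1)$ you are on the paper's track (a Conidis-style cover plus an effective Fatou lemma), but the two decisive steps are asserted rather than carried out. First, the claim that the sets $U_c=\bigcup_N\bigcap_{n\ge N}\{\alpha:\KS(\alpha_1\dots\alpha_n)<n-c\}$ ``yield a uniform $\0$-Martin-L\"of test'' is exactly the nontrivial content of Conidis' theorem: $U_c$ is an increasing union of closed sets, not an open set, and covering a $\liminf$ of uniformly effectively open sets of measure $\le\varepsilon$ by a $\0$-effectively open set of measure $\le\varepsilon'$ requires the explicit construction $U_{1..k_1}\cup U_{k_1+1..k_2}\cup\dots$ with a $\0$-computable choice of cut points. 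Second, and fatal for the $O(1)$ claim: a family of open sets with $P(V_c)\le 2^{-c}$ only bounds the expectation-bounded deficiency $\mathbf{d}^\0$ up to an additive $O(\log)$ term, because converting a Martin-L\"of-type family into an integrable test costs a logarithm. You correctly observe that $\int 2^{n-\KS(\omega_1\dots\omega_n)}\,dP$ is unbounded, but ``the effective form \dots is what yields the required $O(1)$ integrability'' is not an argument --- the effective Fatou lemma \emph{requires} uniformly bounded integrals as a hypothesis and cannot create them. The missing idea is the replacement of these functions by $f_m(\omega)=\sum_{\{x\mid\KS(x)<m\}}2^{|x|-m}\chi_{x\Omega}$, which have integral at most $1$ and still satisfy $\liminf_m f_m(\omega)\ge 2^c$ whenever $\KS(\omega_1\dots\omega_n)<n-c$ for all large $n$; only then does the function version of the effective Fatou lemma deliver an integrable $\0$-lower semicomputable majorant and the $O(1)$ bound.

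For the reverse inequality $\mathbf{d}^\0(\omega)\le\liminf_n[n-\KS(\omega_1\dots\omega_n)]+O(1)$ you propose compressing long prefixes of members of a $\0$-open set $V$ of measure $2^{-d}$ on a plain machine; the paper instead uses the additivity formula $\KS(a,b)=\KP(a\mid\KS(a,b))+\KS(b\mid a,\KS(a,b))$ together with $\lim_N\KP(x\mid N)\le\KP^\0(x)$, which gives the $O(1)$ bound in a few lines. Your compression route is the classical one, but the sketch hides the two places where it loses precision. (a) The computable approximation to the $\0$-enumeration of $V$ is not monotone: cylinders appear and are retracted, and the union over all stages can have measure far exceeding $2^{-d}$; one must use the prefix length $n$ as a stage bound and argue that the stage-$n$ approximation both respects the measure bound and has stabilized on the cylinder containing $\omega$ --- ``absorbing the cost \dots with $O(1)$ bookkeeping'' does not describe a construction. (b) Even granting (a), a program of length $n-d$ lets the decompressor recover $n$ from $|p|+d$ only if $d$ is known, and making a single plain machine uniform in $d$ costs $O(\log d)$; as written you obtain $\KS(\omega_1\dots\omega_n)\le n-d+O(\log d)$, i.e.\ the theorem only with logarithmic precision. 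You should either explain how the dependence on $d$ is eliminated from the compression scheme, or switch to the additivity argument, which sidesteps both issues.
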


First let us reproduce the proof of the easy direction ($\le$). We assume that $\mathbf{d}^\0(\omega)$ equals $d$, and show that $n-\KS(\omega_1\ldots\omega_n)\ge d-O(1)$ for sufficiently large $n$. Since 
    $$\mathbf{d}^\0(\omega)=\limsup n-\KP^\0(\omega_1,\ldots,\omega_n)$$
(we omit $O(1)$ terms here and later) we may assume that 
    $$
\KP^\0(\omega_1\ldots\omega_m)\le m-d
    $$
for some $m$. Then we can use the additivity property\footnote{The direction ($\le$) that we need is quite simple: $\KS(a,b)=\KS(a,b|\KS(a,b))$, and $\KS(u,v|w)\le \KP(u|w)+\KS(v|w)$ by concatenation of the programs.} for plain complexity~\cite{BauwensAdditivity},
  $$\KS(a,b) = \KP(a|\KS(a,b)) + \KS(b|a,\KS(a,b)),$$   
for $a=\omega_1\ldots\omega_m$ and $b=\omega_{m+1}\ldots\omega_n$. Then we have
   $$
\KS(\omega_1\ldots\omega_n)\le \KS(a,b)\le \KP(a|\KS(a,b))+\KS(b|\KS(a,b)).
  $$
The second term does not exceed $|b|$, i.e., $n-m$; it is enough to show, therefore, that the first term is bounded by $m-d$, i.e., by $\KP^\0(\omega_1,\ldots,\omega_m)$. Indeed, the condition $C(a,b)$ tends to infinity as $n\to\infty$, and $\lim_N \KP(x|N)\le \KP^\0(x)$. (Indeed, we can approximate $\0$ making $N$ steps of enumeration, and for large $N$ this is enough.)

Now we switch to the other direction ($\ge$). The qualitative version says that a sequence $\omega$ such that $n-\KS(\omega_1\ldots\omega_n)\to\infty$, is not $\0$-random, and we start by proving this version. So let us assume that $\KS(\omega_1\ldots\omega_n)<n-c$ for all sufficiently large $n$.  To show that $\omega$ is not Martin-L\"of $\0$-random, we need to cover $\omega$ by a $\0$-effectively open set of small measure (uniformly).

Consider the set $U_n$ of sequences $\alpha$ such that $\KS(\alpha_1\ldots\alpha_n)<n-c$. This is an
effectively open set (uniformly in $n$) that has measure at most $2^{-c}$ (since there are less than
$2^{n-c}$ strings of complexity less than $n-c$). We know that our sequence $\omega$ belongs to all
$U_n$ for sufficiently large $n$ (but we do not know the threshold for ``sufficiently large''). It
remains to apply the following result of Conidis~\cite{conidis} (for its applications and discussion
see also~\cite{limitComplexitiesRevisited} where this statement was mentioned as a conjecture, and
the revised version~\cite{limitComplexitiesOnceMore}).

\begin{theorem}[Conidis]\label{th:conidis} 
 Let $\varepsilon>0$ be a rational number and let $U_0,U_1,\ldots$ be a sequence of
 uniformly effectively open sets of measure at most $\varepsilon$ each.  Then for every rational
 $\varepsilon'>\varepsilon$ there exists a $\mathbf{0}'$-effectively open set $V$ of measure at
 most $\varepsilon'$ that contains $\liminf_{n\to\infty} U_n= \bigcup_{N} \bigcap_{n\ge N} U_n$,
 and the $\mathbf{0}'$-enumeration algorithm for $V$ can be effectively found given $\varepsilon$,
 $\varepsilon'$, and the enumeration algorithm for $U_i$.  
\end{theorem}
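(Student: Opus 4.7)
The plan is to construct $V$ as a union $V = \bigcup_N V_N$, where each $V_N$ is an effectively open neighbourhood of $B_N := \bigcap_{n \ge N} U_n$ with index computable from $\mathbf{0}'$ uniformly in $N$, and with tail loss $\mu(V_N \setminus B_N) \le 2^{-N}\delta$, for $\delta := \varepsilon' - \varepsilon$. Since $\liminf_n U_n = \bigcup_N B_N$, we have $V \supseteq \liminf_n U_n$, and
\[
\mu(V) \le \mu(\textstyle\liminf_n U_n) + \sum_{N\ge 1}\mu(V_N\setminus B_N) \le \varepsilon + \delta = \varepsilon',
\]
as required; uniformity in $N$ of the indices makes $V$ a $\mathbf{0}'$-effectively open set.

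The natural candidate for $V_N$ is $V_{N,M} := \bigcap_{n=N}^{M} U_n$ for an appropriate $M = M(N)$. These sets are uniformly effectively open in $(N,M)$, decrease set-theoretically to $B_N$ as $M \to \infty$, and their measures --- being c.e.\ reals --- are $\mathbf{0}'$-computable to any prescribed precision. So in principle, for every $N$ the oracle $\mathbf{0}'$ has enough information to pick an $M(N)$ making $\mu(V_{N,M(N)}) - \mu(B_N) < 2^{-N}\delta$ and to set $V_N := V_{N,M(N)}$.

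The hard part is actually selecting $M(N)$ by a $\mathbf{0}'$-algorithm: the literal convergence condition ``the decreasing tail $\mu(V_{N,M}) - \mu(B_N)$ has dropped below $2^{-N}\delta$'' is a $\Pi^0_2$ statement and is not directly $\mathbf{0}'$-semidecidable. My proposed workaround is to replace the once-and-for-all choice of $M(N)$ by a staged $\mathbf{0}'$-enumeration: the procedure incrementally refines its $\mathbf{0}'$-approximations to the c.e.\ reals $\mu(V_{N,M})$, commits basic open sets into $V$ subject to a running measure budget $\varepsilon + \delta$, and exploits the geometric slack $2^{-N}\delta$ to absorb the approximation errors. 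Verifying that this process commits some good $M(N)$ for every $N$ while keeping the total enumerated measure inside the budget is the technical core of the proof.
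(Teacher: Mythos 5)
The plan founders on exactly the step you flag and then defer. For your decomposition to give $\mu(V)\le\varepsilon'$ you need, for each $N$, an index $M(N)$ with $\mu(V_{N,M(N)})-\mu(B_N)<2^{-N}\delta$, and no $\mathbf{0}'$-algorithm can produce such an $M(N)$ in general. The obstruction is that $\mu(B_N)=\inf_M\mu(V_{N,M})$ is merely the infimum of a decreasing sequence of $\mathbf{0}'$-computable reals (each $\mu(V_{N,M})$ is left-c.e.); $\mathbf{0}'$ can approximate it from above but has no modulus of convergence for it, and locating a stage $M$ within a prescribed distance of the infimum is a modulus-of-convergence problem that in general requires $\mathbf{0}''$. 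The staged-enumeration workaround does not evade this: enumeration into $V$ is monotone, so once you commit the basic open sets of some $V_{N,M}$ you cannot retract them when later $U_n$'s reveal that $\mu(V_{N,M})$ exceeds $\mu(B_N)$ by far more than $2^{-N}\delta$; and if you postpone committing until you are ``sure,'' the certificate you are waiting for is precisely the $\Pi^0_2$ fact you cannot decide. Nothing in the proposal bounds the total measure of wrongly committed points, and that bound is the whole content of the theorem: with only the trivial estimate each $V_N$ can cost close to $\varepsilon$, so the budget $\varepsilon+\delta$ is exhausted already at $N=2$ unless the sets $V_N$ are somehow forced to overlap.

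The paper's proof keeps your idea of using finite intersections but replaces the per-$N$ error bound by an amortized bound on the union. It cuts the indices into consecutive blocks and sets $V=U_{1..k_1}\cup U_{k_1+1..k_2}\cup\cdots$, where $U_{k..l}=U_k\cap\cdots\cap U_l$; this contains $\liminf_n U_n$ for free, since any $\omega\in B_N$ lies in every block starting beyond $N$. The endpoints are chosen by a $\mathbf{0}'$-search against a ladder $\varepsilon<\varepsilon_1<\varepsilon_2<\cdots<\varepsilon'$: the current endpoint $k_j$ is pushed out to $i$ whenever some $i>k_j$ makes the measure of $(\text{current union})\cup U_i$ exceed $\varepsilon_j$ --- a $\Sigma^0_1$ event about a c.e.\ open set, hence $\mathbf{0}'$-decidable --- and each such push removes at least $\varepsilon_j-\varepsilon_{j-1}$ from the measure of the current block, so the search halts. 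Note that the individual blocks are never required to be close to the corresponding $B_N$ in measure (each may have measure close to $\varepsilon$); what is controlled is $\mu(V)$ itself, through the overlaps that the search enforces. This potential argument is the ingredient missing from your outline, and without it, or something playing the same role, the proposal does not go through.
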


\begin{proof}
 Let us denote by $U_{k..l}$ the intersection $U_{k}\cap U_{k+1}\cap\ldots\cap U_l$. The set $V$ will be
 constructed as $U_{1..k_1} \cup U_{k_1+1..k_2} \cup\ldots$ for some $\0$-computable sequence
 $k_1<k_2<\ldots$; this guarantees that $V$ is $\0$-effectively open and that $\liminf U_i\subset
 V$. It remains to explain how we choose $k_i$ such that $V$ has measure at most $\varepsilon'$.
 
Let us fix an increasing computable sequence $\varepsilon<\varepsilon_1<\varepsilon_2<\ldots<\varepsilon'$. 
 There exists some $k_1$ such that for every $i>k_1$ the set 
 $$ 
   U_{1..k_1} \cup U_i
 $$ 
 has measure at most $\varepsilon_1$. Indeed,
 if for some $i$ the measure is greater than $\varepsilon_1$, then, adding $U_i$ as a new term in the
 intersection (by increasing $k_1$ up to $i$), we decrease the measure of the intersection at least
 by $\varepsilon_1-\varepsilon$. (If $A\cup B$ has measure greater than $\varepsilon_1>\varepsilon$
 while $B$ itself thas measure at most $\varepsilon$, then $A\setminus B$ has measure at least
 $\varepsilon_1-\varepsilon$, so the measure of $A$ decreases at least by
 $\varepsilon_1-\varepsilon$ after intersecting it with $B$.) If the newly found $k_i$ 
 does not satisfy the condition, we repeat the process. Each time this happens, the measure of the intersection 
 decreases by at least $\varepsilon_1-\varepsilon$, hence this can happen only finitely many times. 
 
 For similar reasons we can then find $k_2$
 such that for every $i$ the set 
 $$ 
  U_{1..k_1} \cup U_{k_1+1..k_2} \cup U_i  
 $$ 
 has measure at most $\varepsilon_2$ for every $i>k_2$. 
 Indeed, the size of $U_{1..k_1} \cup U_i$ is bounded by $\varepsilon_1$, 
 hence if the measure of the set above exceeds $\varepsilon_2$, then there is at least a 
 $(\varepsilon_2 - \varepsilon_1)$-part of $U_{k_1+1..k_2}$ outside $U_{1..k_1} \cup U_i$ (in particular, outside $U_i$). Thus adding $U_i$ as a new term in the
 intersection $U_{k_1+1..k_2}$ decreases its measure by at least $\varepsilon_2-\varepsilon_1$;
 such a decrease may happen only finitely many times. 
 
 We continue this construction for $k_3,k_4$ etc. Note that this construction is $\mathbf{0}'$-computable and the union 
 $$ 
  V=U_{1..k_1} \cup U_{k_1+1..k_2} \cup U_{k_2+1..k_3} \cup\ldots 
 $$ 
 is an $\mathbf{0}'$-effectively open cover of $\liminf U_n$ of measure at most $\varepsilon'$.
\end{proof}

A more careful analysis of this argument allows us to get the statement of Theorem~\ref{th:2randC} in weak form, with logarithmic precision. So we need to modify the argument. First, we formulate a version of Conidis' theorem with functions instead of sets (that also can be considered as a constructive version of Fatou's lemma).

\begin{theorem}\label{th:Fatou-constr}
Let $f_1,f_2,\dots$ be a series of uniformly lower semicomputable functions on Cantor space such
that $\int f_i(\omega)\,d\mu(x)$ does not exceed some rational~$\varepsilon>0$ for all~$i$. Then for every
$\varepsilon'>\varepsilon$ one can uniformly construct a lower $\mathbf{0}'$-semicomputable
function $\varphi$ such that 
$$
  \liminf\,f_n (\omega)\le \varphi(\omega)\text{ \ for every $\omega$, \ and }\int\varphi(\omega)
  d\mu(\omega)\le\varepsilon'.
$$
\end{theorem}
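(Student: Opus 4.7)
The plan is to mirror the set-theoretic argument of Theorem~\ref{th:conidis}, with pointwise minimum playing the role of intersection and pointwise maximum the role of union. Fix a computable increasing sequence $\varepsilon = \varepsilon_0 < \varepsilon_1 < \varepsilon_2 < \cdots$ with $\lim\varepsilon_j \le \varepsilon'$. I would inductively construct a $\mathbf{0}'$-computable sequence $0 = k_0 < k_1 < k_2 < \cdots$ and set $g_j := \min_{k_{j-1}<i\le k_j} f_i$, maintaining the stage-$j$ invariant
\[
  \int \max(g_1,\ldots,g_j,f_i)\,d\mu \le \varepsilon_j \quad \text{for every } i > k_j,
\]
and then define $\varphi := \sup_j g_j$.

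Given the invariants, both required properties follow easily. For the pointwise bound, $g_j(\omega) \ge \inf_{i>k_{j-1}} f_i(\omega)$, and since $k_{j-1}\to\infty$, taking sup over $j$ yields $\varphi(\omega) \ge \liminf_n f_n(\omega)$. For the integral bound, applying the stage-$j$ invariant with $i = k_j+1$ together with the pointwise estimate $g_{j+1}\le f_{k_j+1}$ gives $\int \max(g_1,\ldots,g_{j+1})\,d\mu \le \varepsilon_j$; monotone convergence then delivers $\int \varphi\,d\mu \le \varepsilon'$. The function $\varphi$ is $\mathbf{0}'$-lower semicomputable because finite $\min$ and countable $\sup$ preserve lower semicomputability and the $k_j$ are $\mathbf{0}'$-computable.

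The substance is in carrying out the inductive step. I would start stage $j$ with $k_j := k_{j-1}+1$, enumerate candidate indices $i > k_j$ together with rational witnesses that $\int \max(g_1,\ldots,g_j,f_i)\,d\mu > \varepsilon_j$ (a $\Sigma_1$ condition, since the integrand is lower semicomputable), and on each such witness reset $k_j := i$. The key estimate, establishing that this loop halts, is the function-level analogue of the set calculation in Theorem~\ref{th:conidis}. Writing $h := \max(g_1,\ldots,g_{j-1},f_i)$, the previous stage guarantees $\int h\,d\mu \le \varepsilon_{j-1}$, and from the identity $\int \max(g_j,h) + \int \min(g_j,h) = \int g_j + \int h$ combined with the pointwise inequality $\min(g_j,h)\ge \min(g_j,f_i)$ (which uses that $h$ is a max \emph{containing} $f_i$), one extracts
\[
  \int g_j\,d\mu - \int \min(g_j,f_i)\,d\mu \ge \varepsilon_j - \varepsilon_{j-1}.
\]
Since the reset replaces $g_j$ with $\min(g_j,f_i)$, each reset drops $\int g_j$ by a fixed positive amount; as $\int g_j \ge 0$, only finitely many resets occur. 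The $\mathbf{0}'$ oracle is exactly what is needed to detect that the $\Sigma_1$ search has truly terminated, so the $k_j$'s form a $\mathbf{0}'$-computable sequence.

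The main obstacle is the decrement lemma above: one has to choose the invariant so that $h$ really contains $f_i$ inside its maximum, for otherwise the inequality $\min(g_j,h)\ge \min(g_j,f_i)$ collapses and the analogy with sets breaks. Once the invariant is formulated as $\int \max(g_1,\ldots,g_j,f_i)\,d\mu \le \varepsilon_j$ (rather than the superficially more natural $\int \max(g_1,\ldots,g_j)\,d\mu \le \varepsilon_j$), the rest of the argument is a routine adaptation of Conidis' proof.
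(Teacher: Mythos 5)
Your proposal is correct and follows essentially the same route as the paper: the paper lifts Conidis' construction to functions by passing to undergraph sets in $\Omega\times\mathbb{R}$, where intersection/union become min/max, yielding exactly your $\varphi=\sup_j g_j$ with $g_j=\min_{k_{j-1}<i\le k_j}f_i$. Your invariant $\int\max(g_1,\dots,g_j,f_i)\le\varepsilon_j$ and the decrement estimate via $\int\max(a,b)+\int\min(a,b)=\int a+\int b$ are the direct function-level translation of the measure bookkeeping in the paper's proof of the Conidis theorem, so the two arguments coincide.
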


We get the original Conidis' result when $f_i$ are indicator functions of open sets. In fact, the proof remains almost the same. For each function $f_i$ we consider the set $U_i$ below its graph, i.e., the set of pairs $(\omega, u)$ in $\Omega\times\mathbb{R}$ such that $0\le u\le f_i(\omega)$. The measure of this set equals $\int f_i(\omega)\,d\omega$. The intersection/union operations with these sets correspond to min/max operations with the functions. So the same construction as before gives the function
    $$\varphi(\omega) = \sup(f_{1..k_1}(\omega),f_{k_1+1..k_2}(\omega),\ldots)$$
where 
    $$f_{k..l}(\omega)=\min(f_k(\omega), f_{k+1}(\omega),\ldots, f_l(\omega)).$$
It is easy to see that $\liminf_n f_n(\omega)\le\varphi(\omega)$ (note that $\liminf$ operation on functions corresponds to the same operation on sets). Also functions $f_{i..j}$ are lower semicomputable (minimum of a finite family of lowersemicomputable functions is lower semicomputable), and the function $\varphi$ is semicomputable with an oracle that computes the sequence $k_i$. 

Theorem~\ref{th:Fatou-constr} is proved.

Now we use this theorem to show that if $C(\omega_1\ldots\omega_n)<n-c$ for large $n$, 
then $\mathbf{d}^\0(\omega)\ge c-O(1)$. For that we need to construct a $\0$-lower 
semicomputable randomness test that exceeds $2^c$ on all those~$\omega$. 

One may try to let $f_n(\omega)$ be equal to $2^{n-\KS(\omega_1\ldots\omega_n)}$. Then for all
$\omega$ in question we have $f_n(\omega)>2^c$ for large $n$, and $\liminf f_n(\omega)\ge 2^c$. 
If the integrals $\int f_n(\omega)\,d\omega$ were bounded, we could finish the
proof by applying Theorem~\ref{th:Fatou-constr}. However, it is not the case: we know that
$f_n(\omega)$ exceeds $2^k$ on a set of measure at most $2^{-k}$ (for every~$k$), but this is not
enough for the integral bound.

To fix the problem, we change the definition of $f_n$. For a binary string $u$, let us define the function $\chi_{x\Omega}$ that equals $1$ on the extensions of $x$ and equals $0$ otherwise. Its integral is $2^{-|x|}$. Multiplying this function by $2^{|x|-m}$ for some $m$, we get a function with integral $2^{-m}$. Then consider the sum
     $$
f_m(\omega)=\sum_{\{x\mid\KS(x)<m\}} 2^{|x|-m}\chi_{x\Omega}.
     $$
This sum contains less than $2^m$ terms; each has integral $2^{-m}$, so the integral of the sum is
bounded by~$1$. On the other hand, if $\KS(\omega_1\ldots\omega_n)<n-c$ for all large enough $c$,
the sum for $f_m(\omega)$ includes a term of size at least $2^c$ for all sufficiently large $m$. 

This observation finished the proof of Theorem~\ref{th:2randC}.

\section{Prefix complexity and $2$-randomness}
\label{sec:prefix}

In this section we provide a simple proof of the following result of Miller:
\begin{theorem}[Miller]\label{th:miller-pref}
A sequence $\omega$ is 2-random \textup(Martin-L\"of random with oracle $\0$\textup) if and only if 
     $\liminf_n [n + \KP(n) - \KP(\omega_1\ldots \omega_n)]$
is finite.
\end{theorem}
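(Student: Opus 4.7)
The plan is to prove the two directions separately, following the template of Section~\ref{sec:plain}. For the easy direction (``liminf finite implies $2$-random'') I will in fact show the quantitative statement $\mathbf{d}^{\0}(\omega) \le \liminf_n [n + \KP(n) - \KP(\omega_1\ldots\omega_n)] + O(1)$, by adapting the argument from the first paragraph of Section~\ref{sec:plain}. Assume $\mathbf{d}^{\0}(\omega) = d$, pick $m$ with $\KPH(\omega_1\ldots\omega_m) \le m - d$, and use the identity $\lim_N \KP(x \mid N) = \KPH(x)$ (where $N$ approximates $\0$) to conclude that $\KP(\omega_1\ldots\omega_m \mid n) \le m - d + O(1)$ for all sufficiently large $n$. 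Then, given the condition $n$, one describes $\omega_1\ldots\omega_n$ by prefix-encoding $\omega_1\ldots\omega_m$ and appending the $n - m$ tail bits (whose length is computable from $n$ and the fixed constant $m$), yielding $\KP(\omega_1\ldots\omega_n \mid n) \le n - d + O(1)$. Removing the condition costs at most $\KP(n) + O(1)$, so $\KP(\omega_1\ldots\omega_n) \le n + \KP(n) - d + O(1)$ for large $n$, which gives $\liminf \ge d - O(1) = \mathbf{d}^{\0}(\omega) - O(1)$.

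For the hard direction (``$2$-random implies liminf finite'') I argue the contrapositive by invoking the constructive Fatou lemma (Theorem~\ref{th:Fatou-constr}) on the uniformly lower semicomputable functions $f_n(\alpha) := 2^{n + \KP(n) - \KP(\alpha_1\ldots\alpha_n)}$. The key observation is the uniform integral bound
\[
  \int f_n\,d\mu \;=\; 2^{\KP(n)} \sum_{|x|=n} 2^{-\KP(x)} \;=\; O(1),
\]
where the inner sum is $O(2^{-\KP(n)})$ because the map $n \mapsto \sum_{|x|=n} 2^{-\KP(x)}$ is itself a lower semicomputable semimeasure on $\mathbb{N}$ and is therefore dominated (up to $O(1)$) by the universal one $2^{-\KP(n)}$. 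If $\liminf_n [n + \KP(n) - \KP(\omega_1\ldots\omega_n)] = +\infty$ then $\liminf_n f_n(\omega) = +\infty$, so Theorem~\ref{th:Fatou-constr} delivers a $\0$-lower semicomputable $\varphi$ with finite integral and $\varphi(\omega) = +\infty$; after normalization $\varphi$ is a $\0$-randomness test unbounded on $\omega$, so $\omega$ is not $2$-random.

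The main obstacle is the integral estimate for the $f_n$: once the semimeasure domination $\sum_{|x|=n} 2^{-\KP(x)} = O(2^{-\KP(n)})$ is in hand, the rest of the hard direction is a direct transcription of the plain-complexity argument from Section~\ref{sec:plain}. The easy direction also requires some care in applying $\lim_N \KP(x \mid N) = \KPH(x)$ and in verifying that splitting off a constant-length initial segment inside a prefix description incurs only $O(1)$ overhead; neither step introduces any genuine new difficulty compared with the plain case.
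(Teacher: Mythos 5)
Your easy direction is correct and is essentially the paper's own argument (split off a prefix $a=\omega_1\ldots\omega_m$ of small $\0$-prefix complexity, use $\lim_N \KP(a\mid N)\le\KP^{\0}(a)$, and pay $\KP(n)$ to drop the condition $n$). The problem is in the hard direction.

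The gap is that the functions $f_n(\alpha)=2^{\,n+\KP(n)-\KP(\alpha_1\ldots\alpha_n)}$ are \emph{not} uniformly lower semicomputable, so Theorem~\ref{th:Fatou-constr} does not apply to them. The factor $2^{-\KP(\alpha_1\ldots\alpha_n)}=\mathbf{m}(\alpha_1\ldots\alpha_n)$ is approximable from below, but the factor $2^{\KP(n)}=1/\mathbf{m}(n)$ is only approximable from \emph{above} (it decreases as the enumeration proceeds), and a product of a lower semicomputable and an upper semicomputable quantity is in general neither. The construction behind Theorem~\ref{th:Fatou-constr} (choosing the cut points $k_1<k_2<\ldots$ by watching measures of unions of the sets below the graphs) needs the $f_i$ to be enumerable from below, so this is not a cosmetic issue. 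Your integral estimate $\sum_{|x|=n}2^{-\KP(x)}=O(2^{-\KP(n)})$ is correct, but it cannot rescue the semicomputability. Nor can you simply drop the offending factor and work with $2^{\,n-\KP(\alpha_1\ldots\alpha_n)}$ inside Theorem~\ref{th:Fatou-constr}: the hypothesis $\liminf[n+\KP(n)-\KP(\omega_1\ldots\omega_n)]=\infty$ does not force $\liminf 2^{\,n-\KP(\omega_1\ldots\omega_n)}=\infty$, precisely because $2^{\KP(n)}\to\infty$.

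This obstacle is exactly what the paper's Section~\ref{sec:prefix} is built to circumvent, and it is why the prefix case is not a one-line corollary of the plain-complexity machinery. The paper keeps the honestly lower semicomputable functions $f_n(\omega)=2^{\,n-\KP(\omega_1\ldots\omega_n)}$ (whose integrals sum to $\sum_x\mathbf{m}(x)\le 1$ by G\'acs' formula) and moves the factor $2^{\KP(n)}$ to the \emph{comparison} side: the hypothesis says $f_n(\omega)/\mathbf{m}(n)\to\infty$, hence the tails of $\sum_n f_n(\omega)$ are not bounded by the tails of $\sum_n\mathbf{m}(n)$, hence not by the tails of any computable convergent series. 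Lemma~\ref{lem:coverSlowConvergence} (the ``slow convergence'' covering lemma, proved by the threshold construction with $S_i$ and $t_i$) then shows that such an $\omega$ is not $\0$-random. You would need to prove and use something like that lemma; Theorem~\ref{th:Fatou-constr} alone is not enough.
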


In the next section we will prove a quantitative version of this result: this $\liminf$ equals $\mathbf{d}^\0(\omega)$, and this will require a more complicated proof. However, in one of the directions the quantitative result is equally simple, so we start with this direction.

Let us prove that $\mathbf{d}^\0(\omega)\le\liminf [n+\KP(n)-\KP(\omega_1\ldots\omega_n)]$. We use almost the
same argument as for Theorem~\ref{th:2randC}. Since $\mathbf{d}^\0(\omega)$ is equal to $\liminf_m [m-\KP^\0(\omega_1\ldots\omega_m)$ up to $O(1)$ additive term, we assume that $\KP^\0(\omega_1\ldots\omega_m)=m-d$ and show that $\KP(\omega_1\ldots\omega_n)\le n+\KP(n)-d +O(1)$ for large~$n$.

Let $a=\omega_1\ldots\omega_m$ and $b=\omega_{m+1}\ldots\omega_n$. Using the bound for the prefix complexity of a pair $\KP(u,v)\le\KP(u)+\KP(v|u)+O(1)$ (also in the conditional version), we note that (up to $O(1)$-terms)
\begin{align*}
    \KP(\omega_1\ldots\omega_n)&\le\KP(n)+\KP(\omega_1\ldots\omega_n|n)\le\\
       &\le \KP(n)+\KP(a,b|n)\le\\
       &\le \KP(n)+\KP(a|n)+\KP(b|a,n).
\end{align*}
 It remains to note that 
 \begin{itemize}
 \item the last term does not exceed $m-n$ (the condition is enough to reconstruct $m-n$, and the prefix complexity of a string when its length is given, is bounded by this length);
 \item for sufficiently large $n$ the value of $\KP(a|n)$ does not exceed $\KP^\0(a)$ (the required part of $\0$ can be reconstructed during $n$ enumeration steps).
 \end{itemize}
So, for large $n$ the right-hand side is bounded by 
$$\KP(n)+\KP^\0(a)+n-m\le \KP(n)+(m-d)+n-m=n+\KP(n)-d,$$
as required.

It remains to prove the (qualitative) statement in the other direction:
\begin{quote}\itshape
  Let $\omega$ be a binary sequence such that
      $ 
  \KP(\omega_1\dots\omega_n) - (n + \KP(n)) \to - \infty.
      $
  Then $\omega$ is not $2$-random.
\end{quote}
It will be done in the rest of the section, in several steps.

\subsection{Slow convergence}

Let us start with the following simple definition.
Let $a_i$ and $b_i$ be two series with
non-negative terms. We say that \emph{$a_i$-tails are bounded by $b_i$-tails} if
\[
(a_N + a_{N+1} + \ldots) \leq c (b_N + b_{N+1} + \ldots)
\]
for some $c$ and all $N$. We assume here that $\sum a_i$ converges (but $\sum b_i$ may diverge). 
Reformulation: $a_i$-tails are \emph{not} bounded by $b_i$-tails if the ratio
\[
\frac{a_N + a_{N+1} + \ldots}{b_N + b_{N+1} + \ldots} 
\]
is unbounded. 

\textbf{Examples}:

    1. Let $\mathbf{m}(i)$ be the (discrete) a priori probability of $i$, the maximal (up to a
    constant) lower semicomputable converging series; we may let $\mathbf{m}(i)=2^{-\KP(i)}$ (see
    e.g., \cite{LiVitanyi} or \cite{ShenIntro}). Then the tails of every convergent computable
    series $\sum a_i$ are bounded by the tails of the series $\sum\mathbf{m}(i)$. Indeed,
    $a_i \le O(\mathbf{m}(i))$ implies the same relation for tails.

    2. On the other hand, for every lower semicomputable series there
    exist a computable series with rational terms that
    has the same limit and has bigger tails (that bound the tails of the first one). Indeed,
    each lower semicomputable term can be split into
    a sum of a computable series, and we can add all
    the summands (for all terms) one by one; this 
    delay can only increase the tails. Therefore, being bounded by tails of some convergent computable series is equivalent to being bounded by the tails of $\sum \mathbf{m}(i)$.

\subsection{Lower semicomputable tests and 2-randomness}

Remind from the introduction that Martin-L\"of randomness can be defined using randomness tests (lower semicomputable non-negative
functions on the Cantor space that have integral at most $1$, see the Introduction).  It turns out
that lower semicomputable tests can
be used in a more ingenious way to show that
some sequence is \emph{not 2-random} (not ML-random
relative to the halting problem).

Let $f_i(\cdot)$ be a sequence of (uniformly) lower
semicomputable non-negative functions on $\Omega$. Assume that the sum
$\sum_i\int f_i$ is finite. 
Thus $\sum_i f_i(\cdot)$ is a lower semicomputable test, 
and every sequence $\omega$ such that $\sum_i f_i(\omega)$ diverges, 
is not ML-random. Moreover,
the following statement (where both the condition
and the claim are weaker) is true:

\begin{lemma}\label{lem:coverSlowConvergence}
  If  the tails of the series $\sum_i f_i(\omega)$ are not bounded by any computable series, 
  then $\omega$ is not $\mathbf{0}'$-random.
\end{lemma}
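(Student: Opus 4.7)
The plan is to construct a $\mathbf{0}'$-lower semicomputable test $t$ with $\int t<\infty$ and $t(\omega)=\infty$ whenever $\omega$ has slow tails, thereby showing $\omega$ is not $\mathbf{0}'$-random. The crucial starting point is that with oracle $\mathbf{0}'$ we can compute each integral $\alpha_i=\int f_i$ exactly, and hence compute the tail sums $T_N=\sum_{i\ge N}\alpha_i$ as real numbers.

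First, by Example~2, I would replace the lower semicomputable series $\sum\alpha_i$ by a computable rational series $\sum\beta_i$ with the same sum and with tails $B_N$ satisfying $T_N\le CB_N$ for some constant $C$. Slowness of $\omega$ then translates to $\limsup_N S_N(\omega)/B_N=\infty$; more strongly, applying slowness to the computable series $(\beta_ig_i)$ for any nondecreasing computable weight $g$ with $\sum\beta_ig_i<\infty$ yields $\limsup_N S_N(\omega)/(B_Ng_N)=\infty$, using the bound $\sum_{i\ge N}\beta_ig_i\ge g_NB_N$.

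Given such a $g$, I would then define the $\mathbf{0}'$-effectively open sets
\[
V_j:=\bigcup_N\bigl\{\omega:S_N(\omega)>2^jB_Ng_N\bigr\},
\]
which are $\Sigma^0_1(\mathbf{0}')$ because $B_N$ is $\mathbf{0}'$-computable as a real. By Markov, $\mu(V_j)\le(C/2^j)\sum_N 1/g_N$, so if additionally $\sum_N 1/g_N<\infty$, the $V_j$'s form a $\mathbf{0}'$-Solovay test with $\sum_j\mu(V_j)<\infty$, and by the preceding step every slow $\omega$ lies in every $V_j$, so the resulting $t=\sum_j\mathbf{1}_{V_j}$ is $\mathbf{0}'$-lower semicomputable with $t(\omega)=\infty$ on all slow $\omega$.

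The main obstacle I foresee is that the two summability conditions on $g$, namely $\sum\beta_ig_i<\infty$ and $\sum_N 1/g_N<\infty$, can conflict for $\beta$ of slow decay: already for $\beta_i\sim 1/i^2$, no fixed computable $g$ satisfies both (the first demands $g_i=O(i^{1-\varepsilon})$, the second demands $g_N\ge N^{1+\varepsilon}$). I would expect the resolution to use a $\mathbf{0}'$-adaptive choice of $g$, exploiting that the oracle knows the exact decay rate of $\beta$, or a multi-scale construction combining many $g$'s on different dyadic blocks of $N$ and unioning the resulting Solovay tests so that every slow $\omega$ is caught by at least one component. This step of matching the weight to the decay of $\beta$ is where I expect the real technical content to reside; an alternative route, perhaps cleaner, would be to apply the constructive Fatou lemma (Theorem~\ref{th:Fatou-constr}) to $g_M=\min(\sup_{N\ge M}S_N/T_N,L)$ and combine the resulting $\mathbf{0}'$-lsc bounds $\varphi_L$ into a single test, though handling the $\limsup$-versus-$\liminf$ mismatch seems to require the same kind of adaptive choice.
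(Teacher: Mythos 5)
Your framework is set up cleanly (the Markov/union-bound estimate, the use of Example~2, the weighting trick with a monotone $g$), but the step you flag yourself as unresolved is the entire content of the lemma, and the proposal does not close it. The conflict between $\sum_i \beta_i g_i<\infty$ and $\sum_N 1/g_N<\infty$ is not a removable technicality: it is exactly the failure of the union bound $\mu(V_j)\le\sum_N\mu\{\omega: S_N(\omega)>c_N\}$, which charges each index $N$ separately even though these events overlap massively (a tail exceeding its threshold at $N$ typically forces nearby tails to exceed theirs too). Your proposed repairs do not obviously survive scrutiny either: a $\mathbf{0}'$-adaptive weight $g$ turns $\sum_i\beta_i g_i$ into a merely $\mathbf{0}'$-computable series, whereas the hypothesis gives slowness only against \emph{computable} series, so the coverage claim $\limsup_N S_N(\omega)/(B_N g_N)=\infty$ loses its justification; and Theorem~\ref{th:Fatou-constr} controls a $\liminf$ while slowness is a $\sup$-type condition, as you note. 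So the argument is genuinely incomplete precisely where you say the real technical content resides.

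The paper resolves this with a different idea: the comparison series is not chosen in advance but is \emph{generated by the construction itself}. One maintains a single running basic function $S_i$ (the partial sums of the $f_i$) and a single rational threshold $t_i$, raising $t_i$ minimally whenever $\mu\{\omega: S_i(\omega)>t_i\}$ would exceed $\varepsilon$ and then replacing $S_i$ by $\max(S_i,t_i)$. The measure bound $\mu\{S>T\}\le\varepsilon$ then holds by an invariant restored at every step --- no summation over $N$ at all --- while the accounting inequality $\varepsilon t_i+\int(S_i-t_i)\le\sum_{k\le i}\int f_k$ shows the $t_i$ are bounded, so $T=\lim t_i$ is lower semicomputable, hence $\mathbf{0}'$-computable, and $\{S>T\}$ is $\mathbf{0}'$-effectively open. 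Slowness enters only at the coverage step: $\sum_i(t_i-t_{i-1})$ is itself a computable convergent series, so for a slow $\omega$ some tail $f_i(\omega)+f_{i+1}(\omega)+\cdots$ exceeds the remaining gap $T-t_{i-1}$, which forces $S(\omega)>T$. This self-referential choice of the comparison series is the missing ingredient; I would rework your argument around an adaptive threshold of this kind rather than trying to fix a weight $g$ ahead of time.
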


As we have seen, we may use for comparison
the series $\sum_i \mathbf{m}(i)$ instead of computable series.

\begin{proof}
Without loss of generality we may assume
that $f_1(\cdot)$, $f_2 (\cdot)$, $\ldots$ 
is a computable sequence of basic functions 
(splitting each semicomputable term
into a sum of computable terms, we only increase
the tails).

To show that every $\omega $ with this property 
(very slow convergence) is not $\mathbf{0}'$-random, we need to construct for
every rational $\varepsilon > 0$ a $\mathbf{0}'$-effectively
open set of measure at most $\varepsilon $ that covers (all
such) $\omega$. This construction goes as follows.
Consider computable increasing sequences of basic functions $S_i\colon \Omega \to \mathbb{Q}$ and rational numbers $t_i$ (``thresholds'') constructed in the following way.
We start with zero function $S_0$ and zero threshold
$t_0$. Then for each $i = 1, 2, 3, \dots$ we do the following
steps:

\begin{figure}[h]
  \begin{center}
    \begin{tikzpicture}
      \draw[very thick] (0,4.5) -- (0,0) -- node [anchor=south] {\small{$\Omega$}} (5,0) -- (5,4.5);
      \coordinate (A) at (1.8,3);
      \coordinate (B) at (3.1,3);
      \coordinate (C) at (3.5,3);
      \coordinate (D) at (4.3,3);
      \filldraw[gray!20] (A) rectangle (3.1,2);
      \filldraw[gray!20] (C) rectangle (4.3,2);
      \draw[ultra thick] (0,3) -- node[above] {\small{new $S_i$}} (A)
        .. controls (2.3,4.5) and (2.6,4.5) .. (B) -- (C)
	.. controls (3.8,4) and (4,4) .. (D) -- (5,3) ;
      \draw (0,2) -- node[above] {\small{old $S_i$}}  (1.5,2) -- (A);
      \draw (B) .. controls +(0.2,-0.5) .. (C);
      \draw (D) -- (4.65,2) -- (5,2);
      \draw[dashed,very thin] (0,2) node [left] {\small{$t_{i-1}$}} -- (5,2);
      \draw[dashed,very thin] (0,3) node [left] {\small{$t_i$}}-- (5,3);
      \draw[<->] (1.7,2.45) -- node[below=-2pt] {\scriptsize{$>\varepsilon$}} (4.4,2.45);
    \end{tikzpicture}
  \end{center}
\label{fig:2dTest}
\end{figure}

\begin{itemize}
  \item 
    First, let $S_i (\omega) = S_{i-1}(\omega) + f_i (\omega)$, and $t_i=t_{i-1}$.
  \item 
    If after that the measure of the set 
    $\{\omega|S_i (\omega) > t_i\}$ exceeds $\varepsilon$, increase $t_i$ 
    to get rid of this excess (minimally).
  \item 
    Change $S_i$ as follows: $S_i(\omega) := \max(S_i(\omega), t_i )$
\end{itemize}

If the two last ``correction steps'' were omitted, the sequence $S_i$ 
would converge to $\sum_i f_i$. The correction steps make functions $S_i$ bigger 
(small values of $S_i$ are replaced by the threshold). 
Note that the second step is well defined, 
since $S_i$ is a basic function, and $t_i$ will be one of its finitely many values.
The following two invariant relations are easy to check:

\begin{itemize}
  \item 
    The measure of the set $\{\omega|S_i (\omega) > t_i\}$  is
    bounded by $\varepsilon$. [Indeed, the second step 
    restores this relation if it was destroyed by the
    first step, and the third step does not change 
    the set in question, since the inequality is strict.]
  \item 
    $\varepsilon t_i + \int_\Omega[S_i(\omega) - t_i]\,d\omega \leq \sum_{k=0}^i \int_\Omega f_k(\omega)d\omega$.
    [Indeed, the first step increases the integral in the left-hand side by
    $\int_\Omega f_i$, and two other steps (combined) only decrease the lefthand
    side (the horizontal sections exceeding $\varepsilon$
    are replaced by $\varepsilon$, see the illustration).]

\end{itemize}
Since the right-hand side of the last inequality
is bounded by assumption, the sequence $t_i$ is a
bounded (computable increasing) sequence, and
its limit $T = \lim t_i$  is lower semicomputable (and
therefore $\mathbf{0}'$-computable). The limit of $S_i$ is
some lower semicomputable function $S(\cdot)$.

Recall that we have to construct a $\mathbf{0}'$-effectively
open set of small measure that covers all $\omega$ where tails of $f_i$
exceed tails of all converging computable series. 
This set is defined as the set $W_\varepsilon$  
of all $\omega $ such that $S(\omega) > T $. We need to check that
this set works:

\begin{itemize}
  \item 
    $W_\varepsilon$ is $\mathbf{0}'$-effectively open (uniformly in $\varepsilon$), since $T$ is $\mathbf{0}'$-computable 
    and $S$ is lower semicomputable (even without $\mathbf{0}'$-oracle).
  \item 
    The measure of $W_\varepsilon$ does not exceed $\varepsilon$. 
    Indeed, if it does, then the measure of the set
    $\{\omega|S_i (\omega) > T \}$  would exceed $\varepsilon$ for some $i$,
    which would immediately make the threshold greater than its limit value $T$.
  \item 
    Finally, we need to show that $\omega \in W_\varepsilon$ if the tails of the
    series $\sum_i f_i(\omega)$ are not bounded by tails of any
    computable converging series. In our case
    we compare it with the convergence $t_i\rar T$, i.e., with the series $\sum (t_{i}-t_{i-1})$. 
    Indeed, our assumption guarantees that some tail 
    $f_i (\omega) + f_{i+1} (\omega) + \ldots$ exceeds the distance
    $T - t_{i-1}$, and this implies that $S(\omega) > T$  (since
    we add $f_n(\omega)$  at each step, starting from the same point $t_{i-1}$; 
    additional increases are possible, too).
\end{itemize}
\end{proof}

\subsection{Proof of Theorem~\ref{th:miller-pref}}

Now we are ready to finish the proof of Theorem~\ref{th:miller-pref} by applying Lemma~\ref{lem:coverSlowConvergence} to the sum used
in G\'acs' formula for the universal lower semicomputable test.
We already mentioned the formula for randomness deficiency:
$$
    \mathbf{d}(\omega)=\sup_n [n - \KP(\omega_1\ldots\omega_n)]+O(1).
$$
It is convenient to rewrite it in exponential form.     
Namely, let $\mathbf{m}(x)$ be the universal discrete 
semimeasure $\mathbf{m}(x) = 2^{-K(x)}$, and let $P(x)$  
be the uniform measure of the interval $x\Omega$, i.e.,
$P (x) = 2^{-|x|}$. 
Then for the universal test $\mathbf{u}(\omega)=2^{\mathbf{d}(\omega)}$ we get (up to
$O(1)$-factors in both directions)
\[
 \mathbf{u}(\omega) = \max_{x \prec \omega} \frac{\mathbf{m}(x)}{P(x)} 
\]
where the maximum is taken over prefixes $x$ of $\omega$. Gacs~\cite{GacsExactExpressions} showed not
only this formula, but also a similar formula where maximum is replaced by sum: 
\[
 \mathbf{u}(\omega) = \sum_{x \prec \omega} \frac{\mathbf{m}(x)}{P(x)} 
\]  
 (See~\cite{GacsTestsInClass} for the details.) 
 In fact, we only need to know that the right hand side of this formula has finite integral. 
 For a fixed $x$ the integral of the corresponding term is $\mathbf{m}(x)$, so the entire integral is $\sum_x \mathbf{m}(x)\le 1$.
 
To prove Theorem~\ref{th:miller-pref}, we apply Lemma~\ref{lem:coverSlowConvergence} 
to the sequence 
\[
f_i(\omega) =  \mathbf{m}(x)/P(x) = 2^{i - \KP(\omega_1\dots\omega_i)}
\]
and our assumption says that the ratio $f_i(\omega)/\mathbf{m}(i)$ tends to infinity. (Recall that $\mathbf{m}(i)=2^{-\KP(i)}$.) So the tails of the series $f_i(\omega)$ are not bounded by the tails of the series~$\mathbf{m}(i)$ and therefore not bounded by tails of any computable converging series (being maximal, $\sum\mathbf{m}(i)$ has $O(1)$-bigger tails). The theorem is proven.

\section{Prefix-free complexity: the quantitative result}
\label{sec:prefixStrong}

This section is devoted to the quantitative version of the result of the previous section.

\begin{theorem}\label{th:2randK}
  \[
    \mathbf{d}^\0(\omega) = \liminf_i [i + \KP(i) - \KP(\omega_1\ldots \omega_i)] + O(1) \,.
  \]
\end{theorem}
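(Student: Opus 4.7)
The $\le$ direction was already established at the start of Section~\ref{sec:prefix}, so it remains to show $\mathbf{d}^\0(\omega) \ge \liminf_i[i + \KP(i) - \KP(\omega_1\ldots\omega_i)] - O(1)$. Suppose this $\liminf$ exceeds $c$ and abbreviate $f_i(\omega) := 2^{i - \KP(\omega_1\ldots\omega_i)}$. The hypothesis translates to $f_i(\omega) \ge 2^{c-O(1)}\mathbf{m}(i)$ for all sufficiently large $i$, and hence
\[
  \sum_{i \ge N} f_i(\omega) \ge 2^{c-O(1)} \, b_N, \qquad b_N := \sum_{i \ge N} \mathbf{m}(i),
\]
for every large enough $N$ (depending on $\omega$).

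My plan is to aim for a quantitative upgrade of Lemma~\ref{lem:coverSlowConvergence}: given any uniformly lower-semicomputable sequence $(f_i)$ with $\sum_i \int f_i \le 1$, construct a $\0$-lower-semicomputable function $t\colon \Omega \to [0,\infty]$ with $\int t \le O(1)$ such that for every $\omega$,
\[
  t(\omega) \ge \liminf_N \frac{\sum_{i \ge N} f_i(\omega)}{b_N}.
\]
Specializing to our $f_i$ gives $t(\omega) \ge 2^{c-O(1)}$, and by universality of the $\0$-test we obtain $t \le O(\mathbf{u}^\0)$, yielding $\mathbf{d}^\0(\omega) \ge c - O(1)$ as required.

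To construct $t$, I would revisit the argument of Lemma~\ref{lem:coverSlowConvergence} and extract, in addition to the set $\{S > T\}$, the function $S - T$ itself. It is $\0$-lower-semicomputable, and the same invariant $\varepsilon T + \int(S - T) \le \sum_k \int f_k$ shows $\int(S - T) \le 1$. Running this construction at every dyadic scale $\varepsilon = 2^{-k}$ in parallel and summing the resulting functions with weights coming from a discrete semimeasure should yield a single $\0$-test whose value at $\omega$ grows with the tail-ratio appearing above.

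The hard part will be obtaining additive $O(1)$ precision rather than a logarithmic loss: a naive weighting by $\mathbf{m}(k)$ costs a $\KP(c)$ factor, giving only $\mathbf{d}^\0(\omega) \ge c - \KP(c)$. To eliminate this, I would exploit the adaptivity of the thresholds $t_i$ inside the construction — the tail $T - t_{n-1}$ of the auxiliary computable series $(t_i - t_{i-1})$ is bounded by $O(b_n)$ with a universal constant, absorbed via the tail-universality of $\mathbf{m}$ noted in Section~\ref{sec:prefix}. Arranging for the ``correct'' scale $k$ to fire on each bad $\omega$ without the $\KP(c)$ penalty should match the upper bound direction up to $+O(1)$, completing the proof.
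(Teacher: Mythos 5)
Your reduction of the theorem to a tail-ratio lemma is sound: the hypothesis does give $f_i(\omega)\ge 2^{c-O(1)}\mathbf{m}(i)$ for all large $i$, and a $\0$-lower-semicomputable $t$ with $\int t=O(1)$ that dominates $2^{c-O(1)}$ on such $\omega$ would finish the proof by universality. The gap is in the sketch of that lemma, and it sits exactly where the paper has to introduce a new idea. The single-scale function $S-T$ does not dominate the tail ratio: the construction only gives $S(\omega)-T\ge\sum_{i\ge N}f_i(\omega)-(T-t_{N-1})\ge(2^{c-O(1)}-O(1))\,b_N$ for $N\ge N_0(\omega)$, and since $b_N\to 0$ this degenerates to a bound of order $b_{N_0(\omega)}$, which can be arbitrarily small; so $S-T$ certifies membership in the cover $\{S>T\}$ but is nowhere near $2^c$ on it. Your fallback --- running the construction at every scale and summing with weights $\mathbf{m}(k)$ --- does work, but loses the additive $O(\KP(c))$ you acknowledge: the constant in $T^\varepsilon-t^\varepsilon_{N-1}\le C_\varepsilon b_N$ is \emph{not} universal; it is the constant comparing the tails of the particular computable series $(t^\varepsilon_i-t^\varepsilon_{i-1})$ with those of $\mathbf{m}$, and the paper's estimate $(*)$ shows it is of order $2^{\KP(\varepsilon)}/\varepsilon$. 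Hence the scale that ``fires'' for a given $c$ costs roughly its own description length, and you give no mechanism for removing this penalty --- that removal is the entire difficulty.

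The paper's device, which is the idea missing from your plan, is to lift the problem to $\Omega\times\mathbb{R}_{\ge0}$ (Lemmas \ref{lem:ConidisPrefix} and \ref{lem:coverSlowConvergenceGeneralized}). One replaces $f_i$ by the box function $g_i(\omega,u)=a(i)$ for $0\le u<f_i(\omega)/a(i)$ (with $a$ a computable series whose tails dominate those of $\mathbf{m}$), runs the construction of Lemma \ref{lem:coverSlowConvergence} \emph{once}, at a single large and simple $\varepsilon=2^k$ for which $\Delta t_i^\varepsilon\le\mathbf{m}(i)$ holds outright, obtains a $\0$-effectively open $W\subseteq\Omega\times\mathbb{R}_{\ge0}$ of finite (possibly large) measure containing the whole column $\{(\omega,u):u<\liminf_i f_i(\omega)/a(i)\}$, and defines the test $Q(\omega)$ as the height of that covered column, so that $\int Q\le\mu(W)=O(1)$. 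The extra real coordinate plays the role of your ``correct scale'' continuously and without having to be described, which is what eliminates the $\KP(c)$ loss. As written, your argument yields only $\mathbf{d}^\0(\omega)\ge c-O(\log c)$.
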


In the previous section we already proved the $\le$-inequality; now we need to prove the reverse
one. This follows from  Lemma~\ref{lem:ConidisPrefix} and in its proof we use
a quantitative version of Lemma~\ref{lem:coverSlowConvergence}.

\begin{lemma}\label{lem:ConidisPrefix}
  Let $f_i(\cdot)$ be a series of lower semicomputable functions on the Cantor space 
  such that $\sum_i \int f_i<\infty$. 
  Then there exist a $\0$-lower-semicomputable function
  $Q(\cdot)$ on Cantor space with finite integral such that 
  \[
    \liminf_i \left[ \frac{f_i(\omega)}{\mathbf{m}(i)} \right] \le O(Q(\omega)) \,.
  \]
\end{lemma}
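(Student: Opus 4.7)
I would adapt the threshold construction from the proof of Lemma~\ref{lem:coverSlowConvergence} into a quantitative version. First, by splitting each lower semicomputable $f_i$ into a computable series of basic functions, we may assume $(f_i)$ is a computable sequence of basic functions. Next, apply the universality of $\mathbf{m}$ to the lower-semicomputable convergent series $\left(\int f_i\right)_i$: this yields a constant $c_0$ with $\int f_i \le c_0\,\mathbf{m}(i)$. In particular $\int f_i(\omega)/\mathbf{m}(i)\,d\omega \le c_0$ for every $i$, and by monotone convergence $\int \liminf_i f_i(\omega)/\mathbf{m}(i)\,d\omega \le c_0$. This finite bound on the integral of the target is the ``budget'' the majorant $Q$ must respect.

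Next, run the construction of Lemma~\ref{lem:coverSlowConvergence} with $\varepsilon = 1$ (say) on the series $(f_i)$, producing the lower semicomputable $S$ and the $\mathbf{0}'$-computable limit threshold $T$. A second application of the universality of $\mathbf{m}$ --- this time to the computable convergent series $t_i - t_{i-1}$ --- gives $t_i - t_{i-1} \le c_1\,\mathbf{m}(i)$, hence $T - t_{N-1} \le c_1 M_N$ where $M_N := \sum_{i\ge N}\mathbf{m}(i)$. Writing $e_i := S_i(\omega) - t_i$, the construction implies the recurrence $e_i = \max(e_{i-1} + f_i(\omega) - (t_i - t_{i-1}),\,0)$. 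Consequently, for $\omega$ with $\liminf_i f_i(\omega)/\mathbf{m}(i) \ge c$ and $c > 4c_1$, iterating the recurrence beyond the index $N(\omega)$ past which $f_i(\omega)/\mathbf{m}(i) \ge c/2$ shows $S(\omega) - T = \lim e_i \ge (c/4)\,M_{N(\omega)}$.

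The main obstacle is that this lower bound is scaled by $M_{N(\omega)}$, which can be arbitrarily small when the index $N(\omega)$ is large. To cover all $\omega$ uniformly, I would run the construction in parallel for each starting index $N_0$, applying Lemma~\ref{lem:coverSlowConvergence} to the tail series $(f_i)_{i\ge N_0}$ to obtain $(S^{(N_0)}, T^{(N_0)})$, and setting $Q^{(N_0)}(\omega) := (S^{(N_0)}(\omega) - T^{(N_0)})^+/M_{N_0}$. Each $Q^{(N_0)}$ is $\mathbf{0}'$-lower semicomputable with $\int Q^{(N_0)} = O(1)$, and the analysis of the preceding paragraph shows that $Q(\omega) := \sup_{N_0} Q^{(N_0)}(\omega)$ satisfies $Q(\omega) \ge c/4$ on $\{\omega : \liminf_i f_i(\omega)/\mathbf{m}(i) \ge c\}$. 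The hardest step is then to bound $\int Q$: a naive union bound over $N_0$ diverges, so the argument must exploit the decreasing structure of the tails $\left(\sum_{i\ge N_0}f_i\right)_{N_0}$ relative to the deterministic weights $M_{N_0}$, most likely via a maximal-inequality-style argument (or equivalently, by replacing the parallel parametric family with a single construction whose thresholds are calibrated to units of $\mathbf{m}(i)$ from the outset).
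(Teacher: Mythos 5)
There is a genuine gap, and it sits exactly where you flag it yourself: the integrability of $Q=\sup_{N_0}Q^{(N_0)}$. Your single-run analysis is sound (the recurrence $e_i=\max(e_{i-1}+f_i(\omega)-\Delta t_i,\,0)$ and the lower bound $S(\omega)-T\ge(c/4)M_{N(\omega)}$ are correct), and you correctly diagnose that the bound degrades by the factor $M_{N(\omega)}$. But the repair you propose is left as a conjecture (``most likely via a maximal-inequality-style argument''), and nothing in the structure of the $Q^{(N_0)}$ forces the supremum to be integrable: the functions $S^{(N_0)}-T^{(N_0)}$ for different $N_0$ are not monotone in $N_0$ and share no common majorant with summable weights, so neither a union bound nor a weighted sum $\sum_{N_0}w_{N_0}Q^{(N_0)}$ (which would destroy the pointwise lower bound) closes the argument. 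A second, unacknowledged problem: the constant $c_1$ in $\Delta t_i^{(N_0)}\le c_1\mathbf{m}(i)$ is \emph{not} uniform in $N_0$. Applying universality to the whole computable family $(\Delta t_i^{(N_0)})_{i,N_0}$, whose $i$-sums are only bounded by $O(M_{N_0})$, yields $\Delta t_i^{(N_0)}=O\bigl(\mathbf{m}(i)\,2^{\KP(N_0)}M_{N_0}\bigr)$, and $2^{\KP(N_0)}M_{N_0}=M_{N_0}/\mathbf{m}(N_0)$ is unbounded; so for a fixed $c$ the hypothesis $c>4c_1^{(N(\omega))}$ fails for $\omega$ whose threshold index $N(\omega)$ is large, and the claimed bound $Q\ge c/4$ on $\{\liminf\ge c\}$ is not established even granting integrability.

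The paper's proof resolves precisely this calibration problem by a different device: it first replaces $\mathbf{m}(i)$ by a computable $a(i)$ with strictly larger tails, then lifts everything to $\Omega\times\mathbb{R}_{\ge0}$ via the ``slab'' functions $g_i(\omega,u)=a(i)$ for $0\le u<f_i(\omega)/a(i)$ (and $0$ otherwise), whose integrals equal $\int f_i$. A \emph{single} run of the threshold construction on this larger space (Lemma~\ref{lem:coverSlowConvergenceGeneralized}), with $\varepsilon$ chosen large and of low complexity so that $\Delta t_i^\varepsilon<\mathbf{m}(i)$ --- this is the calibration ``to units of $\mathbf{m}(i)$'' you gesture at --- produces one $\0$-effectively open set $W$ of finite measure covering every $(\omega,u)$ with $u<\liminf_i f_i(\omega)/a(i)$. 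Setting $Q(\omega)=\sup\{v:(\omega,u)\in W\text{ for all }u\in[0,v)\}$ then gives $\int Q\le\mu(W)<\infty$ for free; the extra real coordinate $u$ plays the role of your parameter $N_0$ (or rather of the level $c$), and the measure bound on $W$ does the work of the maximal inequality you would otherwise need. Without this lifting step (or an actual proof of the maximal inequality), your argument does not yet prove the lemma.
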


The $\ge$-inequality of Theorem \ref{th:2randK} then follows from this lemma if we let (as before)
$$
f_i(\omega) = \mathbf{m}(\omega_1\dots\omega_i)/P(\omega_1\dots\omega_i) = 2^{i-\KP(\omega_1\dots\omega_i)}\,.
$$
The lemma gives us a function $Q(\cdot)$ that is a $\mathbf{0}'$-lower semicomputable test (up to a constant: the integral of $Q$ may exceed~$1$, but is finite) and 
$$
\log Q(\omega) \ge \liminf \,[\, (i - \KP(\omega_1\dots\omega_i)) + \KP(i) \,] + O(1)
$$
for every $\omega$. Since
 $\mathbf{d}^\0(\omega)$ is universal, we get the desired $\ge$-inequality.

It remains to prove Lemma \ref{lem:ConidisPrefix}. As we have done in Section~\ref{sec:prefix}, we convert
functions $f_i\colon\Omega\to\mathbb{R}$ to sets in $\Omega\times\mathbb{R}$. Then we apply a version of Lemma~\ref{lem:coverSlowConvergence} (Lemma~\ref{lem:coverSlowConvergenceGeneralized} below) to functions defined on this space.

Let us first explain what are the changes in Lemma~\ref{lem:coverSlowConvergence}.
We considered a sequence of functions $g_i(x)$ and then the set of points $x$ where the
ratios
$$
    \frac{g_i(x)+g_{i+1}(x)+\ldots}{\mathbf{m}(i)+\mathbf{m}(i+1)+\ldots}
$$ 
are not bounded (we have changed the notation and write $g_i$ instead of $f_i$ to avoid confusion, since now the lemma is applied not to $f_i$ but to other functions). The change is that now we consider a larger set of points where these ratios are not bounded by some specific constant ($1$, though any other constant would work), and cover it by a $\0$-effectively open set of finite measure. (The entire space $\Omega\times\mathbb{R}$ now has infinite measure, so this makes sense.)  Here is the exact statement:

\begin{lemma}\label{lem:coverSlowConvergenceGeneralized}
Consider a sequence of uniformly lower semicomputable non-nega\-tive functions  $g_i\colon
\Omega\times\mathbb{R}_{\ge0}\to\mathbb{R}_{\ge0}$ such that 
$\sum_i\int_{\Omega\times\mathbb{R}_{\ge0}} g_i$ is finite, where the integrals are taken with respect to
the product of standard measures on Cantor space and $\mathbb{R}_{\ge0}$. Then there exists a
$\0$-effectively open set $W \subseteq \Omega\times\mathbb{R}_{\ge0}$ of finite measure that covers
all points $z$ such that $$g_i(z) + g_{i+1}(z) + \ldots > \mathbf{m}(i) + \mathbf{m}(i+1) + \ldots$$ 
for some $i$.
\end{lemma}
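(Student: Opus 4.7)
The plan is to adapt the construction of Lemma~\ref{lem:coverSlowConvergence} to the setting $\Omega \times \mathbb{R}_{\ge 0}$, working throughout with the oracle $\mathbf{0}'$, which makes each value $\mathbf{m}(j)$ $\mathbf{0}'$-computable (as a limit of its lower approximations, with $\mathbf{0}'$ deciding convergence up to any rational precision).

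The construction goes as follows. Starting from $S_0 \equiv 0$ and $t_0 = 0$, at step $j$ I perform: (a) $S_j := S_{j-1} + g_j$; (b) $t_j := t_{j-1} + \mathbf{m}(j)$ (forcing a minimum increment of $\mathbf{m}(j)$, using the oracle $\mathbf{0}'$); (c) if $\mu(\{S_j > t_j\}) > 1$, raise $t_j$ minimally to restore the bound; (d) $S_j := \max(S_j, t_j)$. The same bookkeeping as in Lemma~\ref{lem:coverSlowConvergence} (with $\varepsilon = 1$, now with an extra ``$+\mathbf{m}(j)$'' per step due to (b)) gives the invariant
\[
 t_j + \int (S_j - t_j)_+ \,d\mu \;\le\; \sum_{k \le j}\bigl(\textstyle\int g_k + \mathbf{m}(k)\bigr),
\]
whose right-hand side is bounded by hypothesis. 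Hence $t_j \to T < \infty$, the limit $S := \lim S_j$ is $\mathbf{0}'$-lower semicomputable, and $\mu(\{S > T\}) \le 1$. Set $W := \{S > T\}$, which is $\mathbf{0}'$-effectively open of measure $\le 1$ and $T$ is $\mathbf{0}'$-computable.

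For the coverage, given $z$ bad with witness $i^*$, iterating $S_j(z) \ge S_{j-1}(z) + g_j(z)$ together with $S_{j-1}(z) \ge t_{j-1}$ (from the $\max$ operation at the previous stage) yields $S(z) \ge \sup_{k \ge 0} \bigl[ t_k + \sum_{l > k} g_l(z) \bigr]$. Taking $k = i^* - 1$ and invoking the bad condition gives $S(z) > t_{i^*-1} + \sum_{j \ge i^*} \mathbf{m}(j)$. When step~(c) never fires for any $j \ge i^*$, this right-hand side is precisely $T$, so $S(z) > T$ and $z \in W$.

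The main obstacle is handling the case when step~(c) fires at some $j^{\circ} \ge i^*$, which inflates $T$ above $t_{i^*-1} + \sum_{j \ge i^*}\mathbf{m}(j)$ and breaks the direct telescoping. My plan to resolve this is to use the observation that every firing of~(c) at step $j^{\circ}$ means $\mu\bigl(\{S_{j^{\circ}}^{\text{pre}} > t_{j^{\circ}-1} + \mathbf{m}(j^{\circ})\}\bigr) > 1$, i.e.\ a large region of already-high points; and for such $z$ the $\max$ at step~(d) lifts $S_{j^{\circ}}(z)$ to (at least) the new threshold $t_{j^{\circ}}$, thereby absorbing the excess growth of $T$ into $S(z)$. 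A case analysis choosing the best starting index $k$ in the $\sup_k$ bound according to the pattern of firings (e.g.\ taking $k$ equal to the last firing index $j^{\circ} \ge i^*$, after which all increments equal $\mathbf{m}(j)$ exactly) then recovers $S(z) > T$ in all cases. This case-analysis and the careful tracking of excess corrections is where I expect the main technical work to lie.
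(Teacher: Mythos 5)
Your construction points in the wrong direction, and the difficulty you defer to ``the main technical work'' is not a technicality but the whole content of the lemma. For the coverage argument you need the threshold increments to be bounded \emph{above} by $\mathbf{m}(j)$, so that $T-t_{i-1}=\sum_{j\ge i}\Delta t_j\le\sum_{j\ge i}\mathbf{m}(j)<\sum_{j\ge i}g_j(z)\le S(z)-t_{i-1}$. Forcing $\Delta t_j\ge\mathbf{m}(j)$ in step~(b) only inflates $T$ and buys nothing, while the real problem --- the extra increments produced when step~(c) fires --- is left open, and the sketched case analysis does not close it: there need not be a last firing index; the hypothesis gives $\sum_{j\ge i}g_j(z)>\sum_{j\ge i}\mathbf{m}(j)$ only for the single witness $i=i^*$, not for the tail starting after a later firing $j^\circ>i^*$; and the excess a firing adds to $T$ is a global, measure-driven quantity that is not matched by any increase of $S(z)$ at the particular bad point $z$. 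Concretely, if the margin $\sum_{j\ge i^*}g_j(z)-\sum_{j\ge i^*}\mathbf{m}(j)=\delta$ is tiny, all of $g_\cdot(z)$ is concentrated at index $i^*$, and corrections after $i^*$ (caused by mass elsewhere in the space) contribute total excess greater than $\delta$ to $T$, then $S(z)=\max\bigl(S_{i^*}(z),T\bigr)=T$ and $z\notin W$. There is a second, independent gap: because step~(b) consults $\mathbf{0}'$, the sequence $t_j$ is only $\mathbf{0}'$-computable, so $T=\lim t_j$ is merely $\mathbf{0}'$-lower semicomputable (a $\mathbf{0}'$-analogue of an $\Omega$-number), not $\mathbf{0}'$-computable as you assert; enumerating $\{S>T\}$ requires approximating $T$ from \emph{above}, so your $W$ is in general only $\mathbf{0}''$-effectively open.

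The paper's proof keeps the construction of Lemma~\ref{lem:coverSlowConvergence} entirely computable (no oracle calls, no forced increments) and instead \emph{proves} the needed upper bound on the increments. Running the construction with threshold $\varepsilon$, the increments $\Delta t_i^\varepsilon$ form a computable double sequence with $\sum_i\Delta t_i^\varepsilon=O(1/\varepsilon)$; hence $2^{-\KP(\varepsilon)}\varepsilon\,\Delta t_i^\varepsilon$ is lower semicomputable with finite total sum over $(i,\varepsilon)$ and so is $O(\mathbf{m}(i))$ by universality, i.e.\ $\Delta t_i^\varepsilon=O\bigl(\mathbf{m}(i)2^{\KP(\varepsilon)}/\varepsilon\bigr)$. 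Choosing $\varepsilon=2^k$ with $k$ large and simple kills the $O$-constant and gives $\Delta t_i^\varepsilon<\mathbf{m}(i)$ for all $i$, after which the telescoping you wanted goes through, and $T$ remains $\mathbf{0}'$-computable because the $t_i$ are computable. That universality trick is the missing idea.
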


In this lemma we speak about  effectively open sets and lower semicomputable functions for the space
$\Omega\times\mathbb{R}_{\ge0}$, so we need to define them formally. An \emph{effectively open set}
is a union of an enumerable family of basic open sets of the form $x\Omega\times (a,b)$ where
$x\Omega$ is an interval in the Cantor space and $(a,b)$ is an open interval with rational
endpoints; the interval $[0,b)$ can also be used instead of $(a,b)$.  A \emph{lower semicomputable
function} $g\colon\Omega\times\mathbb{R}_{\ge0}\to\mathbb{R}_{\ge0}$ can be defined as a function
such that for every rational $r$ the preimage $\{(\omega,u)|g(\omega,u)<r\}$ is effectively open
uniformly in $r$. However, for the proof it is convenient to use an equivalent definition of  lower
semicomputable functions as pointwise limit of increasing computable sequences of basic functions.
Here a \emph{basic} function is a non-negative function $b(\omega,r)$ that depends only on some
finite prefix of $\omega$ (of some length $m$) and for each of $2^m$ values of $\omega$ is a
piecewise constant function of $r$ that has finite support, and rational breakpoints and
values. Such a function is a constructive object, so we can speak about
computable sequences of basic functions in which the breakpoints and the number of breakpoints of 
each basic function are computable. Taking differences, we can also say that a lower semicomputable
function is a sum of a series whose terms are basic functions.

\begin{proof}
We use the same construction as in the proof of Lemma~\ref{lem:coverSlowConvergence} (see
figure~\ref{fig:3dKrand}), but now the
threshold $\varepsilon$ is large; we will see later how large $\varepsilon$ should be. 
Without loss, we can assume the functions $g_i$ to be  computable (rather than lower semicomputable) 
basic functions defined on $\Omega\times\mathbb{R}_{\ge0}$; 
indeed, by delaying terms, the tails only increase, making the statement only stronger.
The functions $S_i$ are now basic functions too, and $t_i$
are still rational numbers. Recall the construction: we first add $g_i$ (was $f_i$) to $S_{i-1}$,
then take minimal $t_i$ such that the set $S_i(\cdot)>t_i$ has measure at most $\varepsilon$, and
then let $S_i:=\max(S_i,t_i)$. The choice of $t_i$ now is a more difficult task, but since
$\varepsilon$ is rational, functions $S_i$ are basic, and the set $S_i(\cdot)>t_i$ is non-increasing
in $t_i$, the number $t_i$ is rational and can be computed from~$i$. 

The construction of $S_i$ and $t_i$ depend on $\varepsilon$, so we use the notation
$S_i^\varepsilon$ and $t_i^\varepsilon$ for them. The set $W^\varepsilon$ where the function $\lim
S_i^\varepsilon$ exceeds $T^\varepsilon=\lim t_i^\varepsilon$ is $\0$-effectively open uniformly in
$\varepsilon$. Note that the limit $T^\varepsilon$ is finite and the set $W^\varepsilon$ has measure
at most $\varepsilon$ (for every $\varepsilon$) for the same reasons as before; more precisely,
$T^\varepsilon=O(1/\varepsilon)$. ($T^\varepsilon \varepsilon \le \sum_i \int g_i(z)dz \le O(1)$.) 
We need only to prove that for some~$\varepsilon$ the set
$W^\varepsilon$ contains all the points $z$ such that $$g_i(z) + g_{i+1}(z) + \ldots > \mathbf{m}(i)
+ \mathbf{m}(i+1) + \ldots$$ for some $i$.

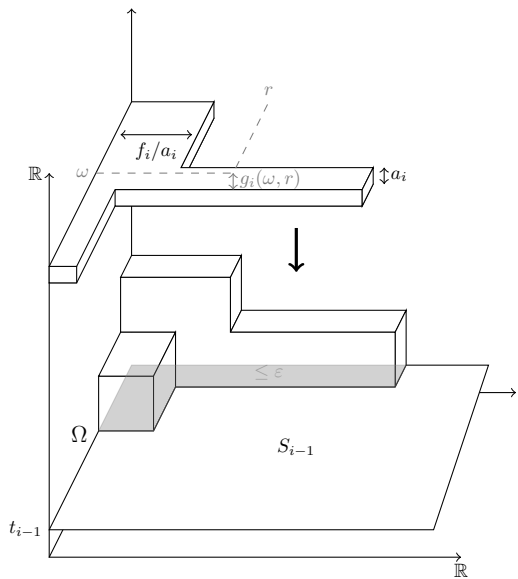
\begin{figure}[h]
  \centering
  \begin{tikzpicture}[scale=0.73, every node/.style={transform shape}]
    \draw[<->] (7,0) -- (0,0) -- node[near start,anchor=east,xshift=-3mm] {\large{$\Omega$}} (-1.5,-3) -- +(7.5,0) node[anchor=north] {$\mathbb R$};
    \draw[->] (0,0) -- (0,7);
    \draw[xshift=-1.5cm,yshift=-3cm,->] (0,0) -- (0,7) node[anchor=east] {$\mathbb R$};

    \filldraw[fill=white] (0,2.5) -- (2,2.5) -- +(-0.2,-0.4) -- ($(0,2.5) - (0.2,0.4)$) -- cycle;
    \filldraw[fill=white] (2,2.5) -- (2,1.5) -- +(-0.2,-0.4) -- ($(2,2.5) - (0.2,0.4)$) -- cycle;
    \filldraw[fill=white] (2,1.5) -- (5,1.5) -- +(-0.2,-0.4) -- ($(2,1.5) - (0.2,0.4)$) -- cycle;
    \filldraw[fill=white] (5,1.5) -- (5,.5)  -- +(-0.2,-0.4) -- ($(5,1.5) - (0.2,0.4)$) -- cycle;
    
    \filldraw[fill=white,xshift=-2mm,yshift=-4mm] (0,2.5) -- (0,1.5) -- (1,1.5) -- (1,0.5) -- (5,0.5) -- (5,1.5) -- (2,1.5) -- (2,2.5) -- cycle;
    \filldraw[fill=white,xshift=-6mm,yshift=-12mm] (0,0.5) -- +(1,0) -- +(1,1) -- +(0,1) -- cycle;
    \filldraw[fill=white,xshift=-2mm,yshift=-4mm] (0,1.5) -- (1,1.5) -- +(-.4,-.8) -- ($(0,1.5) - (.4,.8)$) -- cycle;
    \filldraw[fill=white,xshift=-2mm,yshift=-4mm] (1,1.5) -- (1,0.5) -- +(-.4,-.8) -- ($(1,1.5) - (.4,.8)$) -- cycle;

    \filldraw[fill=white] (-1.5,-2.5) -- (-0.6,-.7) -- ++(1,0) -- ++(.4,.8) -- ++(4,0) -- ++(.2,.4) -- ++(1.5,0)
    -- ++(-1,-3) -- cycle;

    \filldraw[gray,fill=gray,opacity=0.35] (0,0.5) -- node[anchor=north,darkgray,yshift=4pt] {$\le \varepsilon$} ++(5,0) -- ++(-.2,-.4) -- ++(-4,0) -- ++(-.4,-.8) -- ++(-1,0) -- cycle;

    \node at (3,-1) {$S_{i-1}$};
    \node[anchor=east] at (-1.5,-2.5)  {$t_{i-1}$};

    \draw[yshift=5cm] (0,0) -- (1.5,0) coordinate (b) -- ++(-.6,-1.2) -- ++(3.5,0) coordinate (a) -- ++(-.2,-.4) -- ++(-4.5,0) --
    ++(-.7,-1.4) -- ++(-.5,0) -- cycle;
    \draw[fill=white,yshift=5cm]  (b) -- ++(0,0.3) -- ++(-.6,-1.2) -- ++(0,-.3) -- cycle;
    \filldraw[fill=white,yshift=5cm] (-1.5,-3) rectangle +(0.5cm,0.3cm);
    \filldraw[fill=white,yshift=5cm] (-0.3,-1.6) rectangle +(4.5cm,0.3cm);
    \draw[fill=white,yshift=5.3cm] (0,0) -- (1.5,0) -- ++(-.6,-1.2) coordinate (b) -- ++(3.5,0) -- ++(-.2,-.4) -- ++(-4.5,0) --
    ++(-.7,-1.4) -- ++(-.5,0) -- cycle;
    \draw[fill=white]  (a) -- ++(0,0.3) --  ++(-.2,-.4) -- ++(0,-.3) -- cycle;
    \draw[fill=white,<->]  ($(a) + (2mm,0)$) -- node[anchor=west] {$a_i$} ++(0,0.3);

    \draw[yshift=5.3cm,<->]  (-.2,-.6) -- node[anchor=north] {$f_i/a_i$} ++(1.3,0);

    \draw[->,very thick] (3,3) -- (3,2.2);

    \draw[gray,dashed,yshift=5.3cm] (-.65,-1.3) node[anchor=east] {$\omega$} -- 
       +(2.5,0) coordinate (c) -- (2.5,0) node[anchor=south] {$r$};

    \draw[gray,<->] (c) -- node[anchor=west,yshift=1pt] {$g_i(\omega,r)$}  +(0,-0.3); 
   \end{tikzpicture}
   \label{fig:3dKrand}
   \caption{Constructing $t_i$ and $S_i$, and choice of $g_i(\omega,r)$.}
\end{figure}

This is guaranteed if 
$$m(i) + m(i+1) + \ldots \ge \Delta t_i^\varepsilon + \Delta t_{i+1}^\varepsilon + \ldots$$
where $\Delta t_i^\varepsilon$ is defined as the difference $t_i-t_{i-1}$ (in the construction for
the corresponding value of~$\varepsilon$). 
We show that $\Delta t_i^\varepsilon \le m(i)$ for large $\varepsilon$.
Since $\Delta t_i^\varepsilon$ is computable (given $i$
and $\varepsilon$) and 
   $$\sum_i \Delta t_i^\varepsilon = O(1/\varepsilon),$$
we can estimate $\Delta t_i^\varepsilon$:
   $$\Delta t_i^\varepsilon = O(\mathbf{m}(i)2^{\KP(\varepsilon)}/\varepsilon).\eqno(*)$$
Indeed, the sum
    $$\sum_{\varepsilon,i} 2^{{}-\KP(\varepsilon)}\varepsilon \Delta t_i^\varepsilon\le \sum_\varepsilon 2^{{}-\KP(\varepsilon)}\varepsilon O(1/\varepsilon)=O\bigl(\sum_\varepsilon 2^{{}-\KP(\varepsilon)}\bigr)$$
is finite, so 
    $$2^{-\KP(\varepsilon)}\varepsilon \Delta t_i^\varepsilon \le O(\mathbf{m}(i,\varepsilon))\le O(\mathbf{m}(i)).$$
Whatever the $O$-constant in $(*)$ is, we can ensure that $\Delta t_i^\varepsilon < \mathbf{m}(i)$ if
we take $\varepsilon$ large and simple enough, i.e., $\varepsilon=2^k$ for large $k$. As we have
seen, such $\varepsilon$ finishes the proof of Lemma~\ref{lem:coverSlowConvergenceGeneralized}.
\end{proof}

Using this result, we can now prove Lemma~\ref{lem:ConidisPrefix} (and therefore finish the proof of Theorem~\ref{th:2randK}).

\begin{proof}
 Let $a(i)$ be a computable sequence of rational numbers that converges slower than $\mathbf{m}(i)$ in the sense that $a(i) + a(i+1) + \ldots > \mathbf{m}(i) + \mathbf{m}(i+1) + \ldots$ for all $i$. By universality of $\mathbf{m}$, it suffices to prove the statement of the lemma where $\mathbf{m}(n)$ is replaced by $a(n)$, i.e., to construct $Q$ such that
 \[
    Q(\omega) \ge \liminf_i \left[ \frac{f_i(\omega)}{a(i)} \right]\,.
 \]
 First we construct the functions $g_i(\omega,u)$ to which
 Lemma~\ref{lem:coverSlowConvergenceGeneralized} is applied. (Remember that $\omega$ is a point in
 Cantor space, and $u$ is a non-negative real number.)  Consider the function $f_i/a(i)$ and the
 points below its graph, i.e., pairs $(\omega,u)$ such that $0\le u < f_i(\omega)/a(i)$. The area of
 this ``lower-graph'' is $\int f_i / a(i)$. Then we consider the indicator function of this set
 multiplied by $a(i)$: let $g_i(\omega,u)$ be equal to $a(i)$ if $0\le u <f_i(\omega)/a(i)$ and zero
 otherwise (see also figure~\ref{fig:3dKrand}). The integral of $g_i$ (over
 $\Omega\times\mathbb{R}$) equals $\int f_i$, so the sum of integrals is finite. The functions are
 uniformly lower semicomputable.
 
 Applying Lemma~\ref{lem:coverSlowConvergenceGeneralized}, we get a $\0$-effectively open set $W\subset\Omega\times\mathbb{R}_{\ge0}$ of finite measure that contains all pairs $(\omega,u)$ such that 
   $$g_i(\omega,u) + g_{i+1}(\omega,u) + \ldots > \mathbf{m}(i) + \mathbf{m}(i+1) + \ldots$$ 
Note that that includes all points $(\omega,u)$ such that 
   $$0\le u < \liminf_i \left[\frac{f_i(\omega)}{a(i)}\right]\,.$$
Indeed, for such $\omega$ and $u$ the point $(\omega,u)$ is under the graph of $f_i/a(i)$ for large enough $i$, so $g_i(\omega,u)=a_i$ for large enough $i$ and 
    $$g_i(\omega,u) + g_{i+1}(\omega,u) + \ldots =a(i)+a(i+1)+\ldots > \mathbf{m}(i) + \mathbf{m}(i+1) + \ldots$$ 
for large enough $i$.   
   
Now, having the $\0$-effectively open set $W$, we define the function $Q$ as a maximal function such that the area under this function is in $W$:
    $$Q(\omega)=\sup\{ v | (\omega,u)\in W \text{ for all $u$ in $[0,v)$}\}.$$
Note that this function is lower semicomputable for every effectively open $W$ with the same oracle; 
the area under its graph is included in $W$ and therefore the integral of $Q$ does not exceed the
area of $W$ and is finite. As we already noted, $Q$ is an upper bound for $\liminf$ in question.
Lemma~\ref{lem:ConidisPrefix} is proved.  
\end{proof}

\bibliographystyle{plain}
\bibliography{eigen,kolmogorov}

\begin{thebibliography}{10}

\bibitem{BauwensAdditivity}
B.~Bauwens and A.~Shen.
\newblock An additivity theorem for plain complexity.
\newblock {\em Theory of Computing Systems}, Nov 2011.
\newblock accepted for publication.

\bibitem{GacsTestsInClass}
L.~Bienvenu, P.~G{\'a}cs, M.~Hoyrup, C.~Rojas, and A.~Shen.
\newblock Algorithmic tests and randomness with respect to a class of measures.
\newblock {\em Proceedings of the Steklov Institute of Mathematics},
  274(1):34--89, 2011.

\bibitem{limitComplexitiesRevisited}
L.~Bienvenu, A.~Muchnik, A.~Shen, and N.~Vereshchagin.
\newblock Limit com\-plexities revisited.
\newblock {\em Theory of Computing Systems}, 47(3):720--736, 2010.

\bibitem{limitComplexitiesOnceMore}
L.~Bienvenu, A.~Muchnik, A.~Shen, and N.~Vereshchagin.
\newblock Limit com\-plexities revisited [once more].
\newblock {\em ArXiv e-prints}, April 2012.

\bibitem{personalKoliaLaurent12}
L.~Bienvenu and N.~Vereshchagin.
\newblock Alternative proofs of condis' result, Febr 2012.
\newblock Personal communication.

\bibitem{chaitin75}
G.J. Chaitin.
\newblock A theory of program size formally identical to information theory.
\newblock {\em J. Assoc. Comput. Mach.}, 22(3):329--340, 1975.

\bibitem{conidis}
C.J. Conidis.
\newblock Effectively approximating measurable sets by open sets.
\newblock {\em Theoretical Computer Science}, 428:36--46, March 2012.

\bibitem{Downey}
R.G. Downey and D.R. Hirschfeldt.
\newblock {\em Algorithmic Randomness and Complexity}.
\newblock Theory and Applications of Computability. Springer, 2010.

\bibitem{complexityOfComplexity}
P.~G{\'a}cs.
\newblock On the symmetry of algorithmic information.
\newblock {\em Soviet Math. Dokl.}, 15:1477--1480, 1974.

\bibitem{GacsExactExpressions}
P.~G{\'a}cs.
\newblock Exact expressions for some randomness tests.
\newblock {\em Mathematical Logic Quarterly}, 26(25-27):385--394, 1980.

\bibitem{Kolmogorov65}
A.N. Kolmogorov.
\newblock Three approaches to the quantitative definition of information.
\newblock {\em Problemy Peredachi Informatsii}, 1(1):3--11, 1965.

\bibitem{levin73}
L.~A. Levin.
\newblock On the notion of a random sequence.
\newblock {\em Soviet Mathematics-Doklady}, 14:1413--1416, 1973.

\bibitem{levinPpi}
L.~A. Levin.
\newblock Laws of information conservation (nongrowth) and aspects of the
  foundation of probability theory.
\newblock {\em Problemy Peredachi Informatsii}, 10(3):30--35, 1974.

\bibitem{LiVitanyi}
M.~Li and P.M.B. Vit{\'a}nyi.
\newblock {\em An Introduction to {K}olmogorov Complexity and Its
  Applications}.
\newblock Springer-Verlag, New York, 2008.

\bibitem{Martinlof66}
P.~Martin-L\"of.
\newblock The definition of random sequences.
\newblock {\em Information and Control}, 9:602--619, 1966.

\bibitem{Miller2randC}
J.S. Miller.
\newblock Every 2-random real is {K}olmogorov random.
\newblock {\em Journal of Symbolic Logic}, 69(3):907--913, 2004.

\bibitem{Miller2randK}
J.S. Miller.
\newblock The {{\it K}-degrees}, low for {{\it {K}}-degrees}, and weakly low
  for {{\it {K}} sets}.
\newblock {\em Notre Dame Journal of Formal Logic}, 50(4):381--391, 2009.

\bibitem{Nies2rand}
A.~Nies, F.~Stephan, and S.A. Terwijn.
\newblock Randomness, relativization and turing degrees.
\newblock {\em The Journal of Symbolic Logic}, 70(2):515--535, 2005.

\bibitem{SchnorrProcess}
C.~P. Schnorr.
\newblock Process complexity and effective random tests.
\newblock {\em Journal of Computer and System Sciences}, 7(4):376--388, August
  1973.

\bibitem{ShenIntro}
A.~Shen.
\newblock Algorithmic information theory and {K}olmogorov complexity.
\newblock Technical Report 2000-034, Department of Information Technology,
  Uppsala University, and Independent University of Moscow, Russia, December
  2000.

\bibitem{solomonoffI}
R.~J. Solomonoff.
\newblock A formal theory of inductive inference. {P}art {I}.
\newblock {\em Information and Control}, 7(1):1--22, March 1964.

\end{thebibliography}

\end{document}